\newtheorem{property}{Property}
\newtheorem{observation}{Observation}
\newtheorem{definition}{Definition}
\newtheorem{theorem}{Theorem}
\newtheorem{lemma}{Lemma}
\DeclareMathOperator{\EX}{\mathbb{E}}
\begin{document}

\title{Approximating the Geometric Edit Distance%
\thanks{A preliminary version of this work appeared in the Proceedings of the 30th International
Symposium on Algorithms and Computation~\cite{fl-aged-19}.
Most of the work was done while the second author was a student at the University of Texas at
Dallas.}
}


\author{Kyle Fox%
\thanks{The University of Texas at Dallas, \url{kyle.fox@utdallas.edu}}           
\and
        Xinyi Li
\thanks{The University of Utah, \url{xin_yi.li@utah.edu}}           
}




\maketitle

\begin{abstract}
Edit distance is a measurement of similarity between two sequences such as strings, point sequences, or polygonal curves. 
Many matching problems from a variety of areas, such as signal analysis, bioinformatics, etc., need to be solved in a geometric space.
Therefore, the geometric edit distance (GED) has been studied.  
In this paper, we describe the first strictly sublinear approximate near-linear time algorithm for computing the GED of two point sequences in constant dimensional Euclidean space.
Specifically, we present a randomized \(O(n\log^2n)\) time \(O(\sqrt n)\)-approximation algorithm. 
Then, we generalize our result to give a randomized $\alpha$-approximation algorithm for any
$\alpha\in [\sqrt{\log n}, \sqrt{n/\log n}]$, running in time $O(n^2/\alpha^2 \log n)$. 
Both algorithms are Monte Carlo and return approximately optimal solutions with high probability. 
\end{abstract}
	\section{Introduction}
	\label{sec:introduction}
	
Ordered sequences are frequently studied objects in the context of similarity measurements, because sequence alignment plays a vital role in trajectory comparison and pattern recognition. 
	As a consequence, several metrics have been developed to measure the similarity of two sequences, e.g., Fr\'{e}chet distance, dynamic time warping, and their variations. 
	Geometric edit distance, a natural extension of the string metric to geometric space, is the focus of this paper. 
	This concept is formally introduced by Agarwal \textit{et al.} \cite{DBLP:journals/corr/AgarwalFPY15}; however, a similar idea (extending string edit distance to a geometric space) has been applied in other ways during the past decade. 
	Examples include an $l^p$-type edit distance for biological sequence comparison \cite{Stojmirovic2009GeometricAO}, ERP (Edit distance with Real Penalty)~\cite{CHEN2004}, EDR (Edit Distance on Real sequence) \cite{Chen2005}, TWED (Time Warping Edit Distance) \cite{Marteau2009} and a matching framework from Swaminathan \textit{et al.} \cite{Sankararaman:2013:MMS:2525314.2525360} motivated by computing the similarity of time series and trajectories. 
	See also a survey by Wang \textit{et al.}~\cite{Wang2013}.
	
	\paragraph{\textbf{Problem statement.}} 
	Geometric Edit Distance (GED) is the minimum cost of any matching between two geometric point sequences that respects order along the sequences.
	The cost includes a constant penalty for each unmatched point.
	
	Formally, let $P=<p_1, ..., p_m>$ and $Q=<q_1, ..., q_n>$ be two point sequences in ${\rm I\!R}^d$ for some constant $d$. 
	A monotone \textit{matching} $\mathcal{M}$ is a set of index pairs $\{(i_1, j_1), ..., (i_k,
  j_k)\}$ such that the first elements \(i\) (respectively, second elements \(j\)) are distinct and for any two elements $(i, j)$ and $(i', j')$ in $\mathcal{M}$, $i<i'$ if $j< j'$. 

	We call every unmatched point a \textit{gap point}. 
	Let $\Gamma(\mathcal{M})$ be the set of all gap points. 
	The \textit{cost} of $\mathcal{M}$ is defined as
	\begin{align}
	\label{equ:Geocost}
	\delta(\mathcal{M}) = \sum_{(i,j)\in \mathcal{M}}dist(p_i,q_j) + \rho(\Gamma(\mathcal{M}))
	\end{align}
	where $dist(p, q)$ is the distance between points $p$ and $q$ (i.e. the Euclidean norm), and $\rho(\Gamma(\mathcal{M}))$ is a function of all gap points, which is known as a \textit{gap penalty function}. 
	The use of gap points and the gap penalty function allows us to recognize good matchings even in the presence of outlier points. 
	The distance is sensitive to scaling, so, we may only match point pairs that are sufficiently
  close together.
	In this paper, we use a linear gap function.
	That is to say, $\rho(\Gamma(\mathcal{M}))=|\Gamma(\mathcal{M})|\cdot \ell$, where $\ell$ is a constant parameter called the \textit{gap penalty}.
  Without loss of generality, we may assume \(\ell = 1\) when designing algorithms.
	\begin{definition}
		\label{def:GED}
			We denote the GED between two sequences $P, Q$ as:
		\[
		GED(P,Q)=\min_\mathcal{M} \delta(\mathcal{M})=\min_\mathcal{M}\left(\sum_{(i,j)\in \mathcal{M}}dist(p_i,q_j) + |\Gamma(\mathcal{M})| \right)  
		\]
		where the minimum is taken over all monotone matchings. 
	\end{definition}
	\paragraph{\textbf{Prior work.}} 
	To simplify the presentation of prior work, we assume \(n \geq m\).
	It is trivial to compute $GED(P, Q)$ in $O(mn)$ time by simply changing the cost of substitution in the original string edit distance (Levenstein distance) dynamic programming algorithm \cite{Wagner:1974:SCP:321796.321811}.
  Assuming $k$ is the GED, we can achieve an $O(nk)$ time algorithm
  by restricting our attention to the middle \(k\) diagonals of the dynamic programming table (see
  also Ukkonen~\cite{Ukkonen:1985:AAS:4620.4626}).
	There is a slightly subquadratic $O(n^2/\log n)$ time algorithm \cite{Masek1980AFA} for the string
  edit distance, but it appears unlikely we can apply it directly to the geometric case.
	Accordingly, Gold and Sharir~\cite{Gold:2018:DTW:3266298.3230734} proposed a different algorithm which can compute GED as well as the closely related dynamic time warping (DTW) distance in $O(n^2\log\log\log n/\log\log n)$ time in polyhedral metric spaces. 
	Recent papers have shown conditional lower bounds for several sequence distance measures even with some restrictions.
  Assuming the Strong Exponential Time Hypothesis (SETH)~\cite{ip-ock-01}, there is no $O(n^{2-\delta})$ time algorithm for any constant $\delta > 0$ for Fr\'{e}chet distance \cite{DBLP:journals/corr/Bringmann14}, DTW over a constant size alphabet \cite{7354388} or restricted to one-dimensional curves \cite{DBLP:journals/corr/BringmannK15}, and string edit distance on the binary alphabet \cite{DBLP:journals/corr/BackursI14,DBLP:journals/corr/BringmannK15}.%
	\footnote{The (discrete) Fr\'{e}chet and DTW distances are defined similarly to GED;
	however, they use one-to-many correspondences instead of one-to-one matchings, and they disallow the use of gap points.
	As in GED, DTW aims to minimize the sum of distances between corresponding points, while discrete Fr\'{e}chet distance aims to minimize the maximum distance over corresponding points.}
	The latter of the above results implies the same conditional lower bound for GED, even assuming the sequences consist entirely of \(0,1\)-points in ${\rm I\!R}$.
	
	Due to these limitations and difficulties, many researchers have turned to approximation algorithms for these distances.
	In particular, much work has been done to explore approximate algorithms for Fr\'{e}chet distance, DTW, and string edit distance~\cite{DBLP:journals/jocg/BringmannM16,chan2018improved,kuszmaul2019dynamic,DBLP:journals/corr/AgarwalFPY15,DBLP:journals/corr/abs-1005-4033,chakraborty2018approximating}.
	Bringmann and Mulzer~\cite{DBLP:journals/jocg/BringmannM16} describe an \(\alpha\)-approximation algorithm for the discrete Fr\'{e}chet distance that runs in time \(O(n \log n + n^2 / \alpha)\) for any \(\alpha \in [1, n]\).
  Chan and Rahmati~\cite{chan2018improved} improved this running time to $O(n\log n +
  n^2/\alpha^2)$. 
  Very recently, Kuszmaul~\cite{kuszmaul2019dynamic} provided $O(\alpha)$-approximation algorithms
  with $O((n^2/\alpha) \polylog n)$ running times for edit distance over arbitrary metric spaces and
  DTW over well separated tree metrics. 
  Another $O(n^2/\alpha)$ time algorithm with an $O(\alpha)$ approximation factor for \emph{string}
  edit distance is to run Ukkonen’s~\cite{Ukkonen:1985:AAS:4620.4626} $O(nk)$ time algorithm letting
  $k$ be $n/\alpha$, and unmatch all characters if this algorithm cannot return the optimal
  matching. 
  Similarly, we can obtain a different $O(\alpha)$-approximation algorithm for GED running in
  $O(n^2/\alpha)$ time by making use of the $O(nk)$ time exact algorithm mentioned above.
	There are many other approximation algorithms specialized for the string version of edit distance.
	In particular, an $O(\sqrt{n})$-approximation algorithm with linear running time can be acquired easily from an $O(n+k^2)$ time exact algorithm \cite{Landau1998IncrementalSC}.
  More recent results include algorithms with $(\log n)^{O(1/\varepsilon)}$
  \cite{DBLP:journals/corr/abs-1005-4033} and constant approximation ratios
  \cite{chakraborty2018approximating} with different running time tradeoffs.
  The latest result in this line of work is an~\(O(n^{1 + \varepsilon})\) time constant factor
  approximation algorithm for any \(\varepsilon > 0\)~\cite{an-edntc-20}.
	
	For GED specifically, a simple linear time $O(n)$-approximation algorithm was observed by Agarwal \textit{et al.} \cite{DBLP:journals/corr/AgarwalFPY15}. 
	In the same paper, they also offered a subquadratic time (near-linear time in some scenarios) approximation scheme on several well-behaved families of sequences. 
	Using the properties of these families, they reduced the search space to find the optimal admissible path in the dynamic programming graph \cite{DBLP:journals/corr/AgarwalFPY15}. 
	\paragraph{\textbf{Our results.}}
	Inspired by the above applications and prior work, we commit to finding a faster approach to
  approximating GED between general point sequences while also returning the approximate best matching.
  Here, we give the first near-linear time algorithm to compute GED with a strictly sublinear
  approximation factor. 
	We then generalize our result to achieve a tradeoff between the running time and approximation factor. 
  Both of these algorithms are Monte Carlo algorithms, returning an approximately best matching with
  high probability.%
  \footnote{We say an event occurs with high probability if it occurs with probability at least
  $1-\frac{1}{n^c}$ for some constant $c>0$.}
  To simplify our exposition, we assume the points are located in the plane (i.e., $d=2$), and we
  assume the input sequences are the same length (i.e., $m=n$).
  We can easily extend our results to the unbalanced case, and our analysis implies that outside the
  plane, the running times and approximation ratios increase only by a factor polynomial in $d$.
	\begin{theorem}
		\label{the:alg1}
		Given two point sequences $P$ and $Q$ in ${\rm I\!R}^2$, each with n points, there exists an $O(n\log^2n)$-time randomized algorithm that computes an $O(\sqrt{n})$-approximate monotone matching for geometric edit distance with high probability. 
	\end{theorem}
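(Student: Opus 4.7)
My plan combines a doubling search over the edit distance with a sparse dynamic program. As a fallback, if $k := GED(P,Q) \ge \sqrt n$, the trivial matching that declares every point a gap point has cost $2n \le 2\sqrt n \cdot k$ and is already $O(\sqrt n)$-approximate. So the interesting regime is small $k$. Since $k$ is unknown, I would iterate over geometric guesses $\hat k \in \{1, 2, 4, \ldots, 2^{\lceil \log_2 \sqrt n\rceil}\}$, run a scale-$\hat k$ subroutine for each, and return the cheapest matching found. One guess satisfies $\hat k \in [k, 2k)$, so its output wins, and the $O(\log n)$ overhead from the outer loop fits within the $O(n \log^2 n)$ target.

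The subroutine at scale $\hat k$ exploits two structural consequences of $GED(P,Q) \le \hat k$: the optimal matching $\mathcal M^\ast$ has at most $\hat k$ total gap points and total matched distance at most $\hat k$. By averaging, all but $O(\sqrt n)$ matched pairs of $\mathcal M^\ast$ have distance at most $\tau := \hat k / \sqrt n$. Consequently, \emph{any} monotone matching that successfully matches all but $O(\sqrt n)$ indices using pairs of distance $\le O(\tau)$ automatically has cost $O(n \cdot \tau) + O(\sqrt n) = O(\sqrt n \cdot \hat k)$; the subroutine's job is to find such a matching in $\tilde O(n)$ time.

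To that end, I would preprocess $Q$ in $O(n \log n)$ time into a spatial data structure supporting $O(\log n)$-time ball and approximate-nearest-neighbor queries (for example, a BBD tree). For each index $i$, I would issue a few carefully budgeted queries to identify a short list of candidate partners in $Q$ near $p_i$, using random sampling---both over indices in $P$ and over the $q$'s returned per query---to keep the total candidate pool at $\tilde O(n)$ pairs. A shortest-path DP on the resulting sparse DAG, whose monotone edges carry weights equal to in-pair distance plus linear gap penalty for indices skipped between consecutive chosen pairs, then produces the output matching in $\tilde O(n)$ time.

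The hard part is arranging the random sampling so that the sparse DAG provably contains a monotone path of cost $O(\sqrt n \cdot \hat k)$ with high probability while staying near-linear in size. Three error sources must be balanced simultaneously: (i) indices of $P$ whose optimal partner is missed by the candidate lists, (ii) gap points already paid for by $\mathcal M^\ast$, and (iii) long matched pairs of $\mathcal M^\ast$ that we permit ourselves to re-declare as gaps. Each of these should contribute at most $O(\sqrt n)$ to the cost, provided the sampling rates and per-anchor candidate budgets are chosen in concert and their misses are controlled by Chernoff-style concentration. Coupling all three bounds with the monotonicity constraint on the DAG is the main technical obstacle and pins down the $\log^2 n$ factor in the running time.
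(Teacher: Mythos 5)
Your outer framework (doubling search over guesses, the all-gaps fallback when $GED(P,Q)\geq\sqrt{n}$, and the per-scale threshold $\tau=\hat{k}/\sqrt{n}$ obtained by averaging) matches the paper's, but the heart of the proof is missing: you never actually construct the scale-$\hat{k}$ subroutine, and you say yourself that arranging the sampling so the sparse DAG contains a monotone path of cost $O(\sqrt{n}\cdot\hat{k})$ with high probability is ``the main technical obstacle.'' That obstacle is precisely the theorem's content, and the sketch you give does not overcome it. The set of pairs $(i,j)$ with $dist(p_i,q_j)\leq O(\tau)$ can have size $\Theta(n^2)$ (e.g., all points in a small cluster), and finding a monotone matching that leaves only $O(\sqrt{n})$ indices unmatched among such pairs is an LCS-type problem; subsampling this pair set down to $\tilde{O}(n)$ candidates does not obviously preserve the existence of a near-perfect monotone matching (a needed pair survives with probability roughly $\tilde{O}(1/n)$ under uniform pair sampling, and no Chernoff-style argument is offered that some \emph{other} cheap monotone path survives). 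A sparse-DAG shortest-path DP over whatever candidates remain is routine; the unproved claim is that the candidate set simultaneously has near-linear size and contains a cheap monotone matching, and that is exactly what your proposal leaves open.

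The paper closes this gap differently: for each guess $g$ it imposes a randomly shifted grid of cell side $g/\sqrt{n}$, treats points in the same cell as equal characters (so huge families of candidate pairs are handled implicitly), and runs the $O(n+k^2)$ string edit distance algorithm of Landau et al.\ with threshold $12\sqrt{n}+2g$, where ``slides'' along diagonals are done in $O(1)$ time via suffix-tree LCA queries. The randomness is analyzed by bounding the expected number of close pairs of the optimal matching that the grid separates by $2\sqrt{n}$, applying Markov's inequality, and repeating $O(\log n)$ times per guess to get high probability; converting the string matching back to geometry costs at most $\sqrt{2}g\sqrt{n}$ extra because same-cell points are within $\sqrt{2}g/\sqrt{n}$. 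You would need an argument of comparable substance for your sampling scheme. Two smaller issues: your plan does not handle the case $GED(P,Q)<1$ (your smallest guess $\hat{k}=1$ only yields cost $O(\sqrt{n})$, which is not $O(\sqrt{n})\cdot GED$ then; the paper covers this with the initial check that returns the identity matching when $\sum_i dist(p_i,q_i)\leq 1$), and you should state explicitly how the high-probability guarantee is amplified, since a single randomized construction per guess typically gives only constant success probability.
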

	
	The intuitive idea behind this algorithm is very simple.
  We check if the GED is less than each of several geometrically increasing values $g$, each of which is less than $O(\sqrt{n})$. 
	For each $g$, we transform the geometric sequences into strings using a randomly shifted grid and
  run the $O(n+k^2)$ time exact algorithm for strings~\cite{Landau1998IncrementalSC}.
	If the GED is less than $g$, then we get an $O(\sqrt{n})$ approximate matching.
	If we never find a matching of cost $O(\sqrt{n})$, we simply leave all points unmatched as this empty matching is an $O(\sqrt{n})$-approximation for GED with high probability. 
	We give the details for this $O(\sqrt{n})$-approximation algorithm in Section \ref{sec:alg1}.
	
	\begin{theorem} 
		\label{the:alg2}
		Given two point sequences $P$ and $Q$ in ${\rm I\!R}^2$, each with n points, there exists an $O(\frac{n^2}{\alpha^2}\log n)$-time randomized algorithm that computes an $O(\alpha)$-approximate monotone matching for geometric edit distance with high probability for any $\alpha \in [\sqrt{\log n}, \sqrt{n/\log n}]$. 
	\end{theorem}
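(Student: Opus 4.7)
The plan is to generalize the Theorem~\ref{the:alg1} pipeline by enlarging the grid cell side length so that the matched-pair distance budget scales as $\alpha g/n$ instead of $g/\sqrt{n}$, at the cost of tolerating a larger post-snapping string edit distance. Writing $g^* := GED(P,Q)$ for the unknown optimum, I would split the analysis into three regimes.

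For the small regime $g^* \leq n/\alpha^2$, I would simply run the standard central-diagonal $O(nk)$ dynamic program for GED (as in Ukkonen~\cite{Ukkonen:1985:AAS:4620.4626}) with budget $k = n/\alpha^2$, spending $O(n^2/\alpha^2)$ time and returning the exact optimum whenever $g^* \leq n/\alpha^2$. For the intermediate regime $n/\alpha^2 < g^* \leq 2n/\alpha$, I would iterate $g$ over powers of two in $[n/\alpha^2,\,2n/\alpha]$; for each $g$, I impose a randomly shifted axis-aligned grid of side length $s = \Theta(\alpha g/n)$, map every input point to its cell index to obtain strings $P', Q'$, and invoke the Landau \textit{et al.}~\cite{Landau1998IncrementalSC} $O(n+k^2)$ exact string edit distance algorithm with budget $k = O(n/\alpha)$, then convert the returned alignment back to a monotone geometric matching. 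Across $O(\log \alpha) = O(\log n)$ iterations, each of cost $O(n + (n/\alpha)^2) = O(n^2/\alpha^2)$, this stage fits the target budget. Finally, for $g^* > 2n/\alpha$ the empty matching of cost $2n$ is already an $O(\alpha)$-approximation, so outputting the minimum cost matching produced by any of the three stages suffices.

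The key technical claim that remains is the natural analogue of the Theorem~\ref{the:alg1} snapping bound: for $g \geq g^*$ and $s = \Theta(\alpha g/n)$, with high probability the string edit distance of $P', Q'$ is $O(g^*/s + g^*) = O(n/\alpha)$, and any such alignment converts back to a monotone geometric matching of cost at most $O(ns) + O(n/\alpha) = O(\alpha g) + O(n/\alpha) = O(\alpha g^*)$, where the final equality uses $g^* > n/\alpha^2$. The in-expectation bound is immediate since a pair originally matched at distance $d_i$ falls into different cells of a uniformly shifted grid in the plane with probability $O(d_i/s)$, so the expected number of ``split'' matched pairs is $O(\sum_i d_i / s) = O(g^*/s)$. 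The main obstacle will be promoting this to a high-probability bound without incurring more than the advertised single $\log n$ factor: a naive Markov argument forces $\Theta(\log n)$ independent grid trials per guess $g$, costing an extra log factor. I would try to avoid this either with a Chernoff or Efron--Stein style concentration argument exploiting the weak correlations between split indicators under a single shift, or by absorbing the trial count into the outer log over $g$ and using the hypothesis $\alpha \geq \sqrt{\log n}$ to guarantee that $g^*/s$ is large enough for sharp concentration to apply.
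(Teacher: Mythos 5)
Your three-regime decomposition does correctly repair the approximation analysis: the obstruction described in Section~\ref{subsec:flaws} only bites when $g$ is small, and you dispose of that case exactly ($O(nk)$ band DP with $k=n/\alpha^2$), handle $g^*>2n/\alpha$ with the empty matching, and in the middle regime the hypothesis $g^*>n/\alpha^2$ makes the $O(n/\alpha)$ indel cost absorbable into $O(\alpha g^*)$, so taking the cheapest candidate matching is an $O(\alpha)$-approximation (modulo the minor point that aligned-but-unequal characters must be converted to two unmatched points, i.e.\ you need the insertion/deletion-only variant of string edit distance as in Section~\ref{sec:StringAlgorithm}). This is genuinely different from the paper, which never splits into regimes and instead replaces the string black box by a constant-factor algorithm $AGED$ for grid-snapped points (Section~\ref{sec:SGED}).

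The genuine gap is the running time, and you have flagged the right obstacle but your proposed fixes do not close it. In your middle regime every guess $g$ forces cell side $s=\Theta(\alpha g/n)$ (else the conversion term $n\cdot s$ exceeds $O(\alpha g)$), hence a string budget $k=\Omega(g/s)=\Omega(n/\alpha)$ and a per-trial cost of $\Theta(n+n^2/\alpha^2)$ that is \emph{flat} across all $\Theta(\log\alpha)$ guesses. The $\Theta(\log n)$ independent grids per guess cannot be avoided: the critical guess is unknown, and a single shift succeeds there only with constant probability. Your hope for concentration under one shift fails in the worst case, because all split indicators depend on the same shift vector and can be perfectly correlated --- e.g.\ all matched pairs can have the same displacement and identical coordinates modulo $s$, so with probability $\Theta(d/s)$ (which can be far larger than $1/\mathrm{poly}(n)$) \emph{every} matched pair is split at once and the string distance blows past $O(n/\alpha)$; no Chernoff or Efron--Stein argument applies. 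So your scheme yields $\Theta\bigl(\frac{n^2}{\alpha^2}\log\alpha\log n\bigr)$, i.e.\ $\Theta\bigl(\frac{n^2}{\alpha^2}\log^2 n\bigr)$ for $\alpha=n^{\Theta(1)}$, a $\log$ factor above Theorem~\ref{the:alg2}. The paper escapes precisely because $AGED(P',Q',k)$ runs in $O(n+k^2/\Delta)$ time with $k=O(g)$ and $\Delta=g\alpha/n$, i.e.\ $O(gn/\alpha)$ per trial, which grows geometrically with the guess; the $\lceil c\lg n\rceil$ trials per guess then sum over all guesses to $O(n\log^2 n+\frac{n^2}{\alpha^2}\log n)$. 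Obtaining a per-trial cost that scales with $g$ rather than with the fixed budget $n/\alpha$ is exactly what the snapped-point, geometric-slide algorithm of Section~\ref{sec:SGED} provides and what the string-edit-distance black box cannot.
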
 
	
	The second algorithm uses similar techniques to the former, except we can no longer use the string edit distance algorithm as a black box.
	In particular, we cannot achieve our desired time-approximation tradeoff by just directly altering some parameters in our first algorithm.
	We discuss why in Section~\ref{subsec:flaws}.
  To overcome these difficulties, we develop a constant-factor approximation algorithm to compute
  the GED of point sequences obtained by snapping points of the original input sequences to grid
  cell corners.
	Our algorithm for these snapped points is based on the exact algorithm for string edit distance \cite{Landau1998IncrementalSC} but necessarily more complicated to handle geometric distances.
	So, we first introduce the $O(n+k^2)$ time algorithm for strings in Section
  \ref{sec:StringAlgorithm}, and then describe our constant approximation algorithm for points in Section \ref{sec:SGED}.
  We note that a key component of the string algorithm and our extension is a fast method for finding
  maximal length common substrings from a given pair of starting positions in two strings~\(A\)
  and~\(B\).
  A similar procedure was needed in the discrete Fr\'{e}chet distance approximation of Chan and
  Rahmati~\cite{chan2018improved}.
  In Section~\ref{sec:trade-off}, we present the algorithm for Theorem~\ref{the:alg2} using our
  constant approximation algorithm for snapped point sequences as a black box.

	\section{$O(\sqrt{n})$-Approximation for GED}
	\label{sec:alg1}
	Recall that the main part of our algorithm is a decision procedure to check if the GED is less than a guess value $g$. 
	There are two steps in this process: 
	\begin{enumerate}
		\item Transform the point sequences into strings. 
		To be specific, we partition nearby points into common groups and  distant points into different groups to simulate the identical characters and different characters in the string version of edit distance. 
  \item Run a modification of the exact string edit distance algorithm of Landau \emph{et
    al.}~\cite{Landau1998IncrementalSC}.
    To better serve us when discussing geometric edit distance, we aim to minimize the number of
    insertions and deletions to turn $S$ into $T$ \emph{only};
    we consider substitution to have infinite cost.
    Details on this modified algorithm appear in Section~\ref{sec:StringAlgorithm}.%
    \footnote{This variant of the string edit distance is really closer to the shortest common
    supersequence length of the strings rather than the traditional Levenshtein distance, but we
    stick with ``edit distance'' for simplicity.}

	\end{enumerate}
	We explain how to transform the point sequences into strings in Section \ref{subsec:randomGrid}, and we analyze the approximation factor and running time in Sections \ref{subsection:time} and \ref{subsec:alg1corr}.

	For convenience, we refer to the string edit distance algorithm as $SED(
	S, T, k)$, where $S$ and $T$ are two strings with equal length. This algorithm will return a matching in $O(n+k^2)$ time if the edit distance is at most $k$.  
	We give an outline of our algorithm as Algorithm~\ref{alg:subliner}.
	Here, \(c\) is a sufficiently large constant, and we use \(\lg\) to denote the logarithm of base \(2\).
	\begin{algorithm}[h]
		\SetAlgoLined
		\LinesNumbered
		\KwIn{Point sequences $P$ and $Q$}
		\KwOut{An approximately optimal matching for GED}
		\BlankLine 
		\eIf{$\sum_{i=1}^{n} dist(p_i, q_i) \leq 1$}{
			return matching $\{(1, 1), ..., (n, n)\}$
		}{
			\For{$i := 0$ to $\lceil\lg\sqrt{n}\rceil$}{
				$g := 2^i$
				\\\For{$j:=1$ to $\lceil c\lg n\rceil$}{
					Transform $P$, $Q$ to strings $S$, $T$ using a randomly shifted grid \label{alg2line:transform}
					\\$out := SED(S, T, 12\sqrt{n} + 2g)$ \label{alg2line:callSED}
					\\\If{$out\neq false$}{
						return out
					}
				}
			}
			return the empty matching
		}
		\caption{$O(\sqrt{n})$-approximation algorithm for GED}
		\label{alg:subliner}
	\end{algorithm}
	
	\subsection{\textbf{Transformation by a random grid}}
	\label{subsec:randomGrid}
	As stated above, the transformation technique should partition nearby points into common groups and distant points into different groups.
	We use a randomly shifted grid to realize this ideal;
  see Har-Peled~\cite{HarPeled2010Chapter1R} for an introduction to randomly shifted grids.
	
	Recall $P$ and $Q$ lie in ${\rm I\!R}^2$. 
	We cover the space with a grid. 	
	Let the side length of each grid cell be $\Delta$, and let $b$ be a vector chosen uniformly at random from $[0, \Delta]^2$. 
	Starting from an arbitrary position, the grid shifts $b_i$ units in each dimension $i$. 
	For a point $p$, let $id_{\Delta, b}(p)$ denote the cell which contains $p$ in this configuration. 
	We consider two points $p_1=(x_1, y_1)$, and $p_2 = (x_2, y_2)$ in this space. 
	
%
	
	\begin{lemma}
		\label{lem:RandomHighD}
		We have $P(id_{\Delta, b}(p_1)\neq id_{\Delta, b}(p_2))\leq \min\{\frac{|x_1 - x_2| + |y_1 - y_2|}{\Delta}, 1\}$.
	\end{lemma}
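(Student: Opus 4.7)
The plan is to reduce the two-dimensional statement to two one-dimensional calculations and then combine them via a union bound. Observe that the randomly shifted grid is a product of two independent one-dimensional grids (one per coordinate axis), since the shift vector $b = (b_1, b_2)$ has components drawn independently and uniformly from $[0, \Delta]$. Two points lie in the same cell precisely when they agree on both the $x$-cell index and the $y$-cell index, so the failure event ``$id_{\Delta, b}(p_1) \neq id_{\Delta, b}(p_2)$'' is contained in the union of the two one-dimensional failure events.

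First I would analyze a single coordinate. Fix coordinate $x$ and consider the vertical grid lines, which sit at the positions $b_1 + k\Delta$ for $k \in \mathbb{Z}$. The points $p_1$ and $p_2$ lie in different $x$-columns iff at least one such line falls strictly between $x_1$ and $x_2$. Assume without loss of generality $x_1 \leq x_2$. If $|x_1 - x_2| \geq \Delta$, at least one line always falls between them, and the probability is at most $1$. Otherwise, since $b_1$ is uniform on $[0, \Delta]$ and the grid lines are periodic with period $\Delta$, the set of shifts $b_1$ for which a line lies in $(x_1, x_2]$ has Lebesgue measure exactly $|x_1 - x_2|$, so the probability is $|x_1 - x_2|/\Delta$. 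In either case, the one-dimensional probability is bounded by $\min\{|x_1-x_2|/\Delta, 1\}$, and the identical argument works for the $y$-coordinate.

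Finally, a union bound gives
\[
P(id_{\Delta, b}(p_1) \neq id_{\Delta, b}(p_2)) \leq \min\!\left\{\tfrac{|x_1-x_2|}{\Delta},1\right\} + \min\!\left\{\tfrac{|y_1-y_2|}{\Delta},1\right\},
\]
and since a probability is also trivially bounded by $1$, this is at most $\min\{(|x_1-x_2|+|y_1-y_2|)/\Delta,\, 1\}$, as claimed.

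There is no real obstacle here: the main subtlety is just verifying that the event ``a grid line falls between $x_1$ and $x_2$'' depends on $b_1$ only through its value mod $\Delta$, so that the uniform distribution on $[0,\Delta]$ makes the measure calculation immediate. Everything else is a routine application of independence across coordinates and the union bound.
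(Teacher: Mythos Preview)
Your argument is correct and is exactly the standard proof of this fact: reduce to each coordinate via the product structure of the grid, compute the one-dimensional separation probability as $|x_i - y_i|/\Delta$, and combine with a union bound together with the trivial bound of $1$. The paper itself does not spell out a proof of this lemma (it is stated as a known observation, with a pointer to Har-Peled for background on randomly shifted grids), so there is nothing to compare against beyond noting that your write-up is the expected one.
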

	We use this observation in our algorithm and set $\Delta = \frac{g}{\sqrt{n}}$ as each cell's side length. 
	
	\subsection{\textbf{Time complexity}}
	\label{subsection:time}
	We claim the running time for Algorithm \ref{alg:subliner} is $O(n\log^2n)$. 
	Computing $\sum_{i=1}^{n} dist(p_i, q_i)$ takes $O(n)$ time. 
	In the inner loop, the transformation operation (line \ref{alg2line:transform}) takes $O(n)$ time assuming use of a hash table. 
	The running time for $SED(S,T,12\sqrt{n} + 2g)$ is $O(n)$ for $g = O(\sqrt{n})$. 
	Summing over the outer loop and inner loop, the overall running time for Algorithm \ref{alg:subliner} is 
	\[
	\sum_{i=1}^{\lceil\lg\sqrt{n}\rceil} \sum_{j = 1} ^{\lceil c\lg n\rceil} O(n) = O(n\log^2n).
	\]
	
	\subsection{\textbf{Approximation ratio}}
	\label{subsec:alg1corr}
	In this section, we show that Algorithm \ref{alg:subliner} returns an $O(\sqrt{n})$-approximate matching with high probability. 
	\paragraph{\textbf{Notation.}}
  For any monotone matching $\mathcal{M}$, we define $C_S(\mathcal{M})$ as the cost of the
  corresponding edit operations for $\mathcal{M}$ in the string case and $C_G(\mathcal{M})$ to be
  $\delta(\mathcal{M})$ as defined in \eqref{equ:Geocost} for the geometric case (as stated, there
  is no substitution operation in our modified string case).
  Let $\mathcal{M}_G^*$ be the optimal matching for geometric edit distance, and let $\mathcal{M}_S^*$
  denote the optimal matching under the string configuration during a given iteration of the inner
  for loop.
	Our goal is to establish the relationship between $C_G(\mathcal{M}_G^*)$ and $C_G(\mathcal{M}_S^*)$. 
	
	\begin{lemma}
		\label{lem:alg1corr}
    Consider an iteration of the outer for loop, and suppose $GED(P, Q) \leq g$.
    With a probability at least $1-\frac{1}{n^c}$, at least one of the $\lceil c\lg n \rceil$
    iterations of the inner for loop will return a matching $\mathcal{M}_S^*$ where
    $C_S(\mathcal{M}_S^*)\leq 12\sqrt{n} + 2g$.
	\end{lemma}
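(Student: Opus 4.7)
The plan is to construct, from the optimal geometric matching $\mathcal{M}_G^*$, a legal string matching $\mathcal{M}'$, bound its expected string cost below the threshold used in line~\ref{alg2line:callSED}, and then amplify the success probability via Markov's inequality together with the independent repetitions performed by the inner for loop.

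First I would fix one iteration of the inner loop and call a pair $(i,j) \in \mathcal{M}_G^*$ \emph{broken} if $p_i$ and $q_j$ fall in different cells of the random grid, so the characters $S_i$ and $T_j$ disagree. Let $\mathcal{M}'$ be $\mathcal{M}_G^*$ with all broken pairs removed. Then $\mathcal{M}'$ is a legal string matching (every surviving pair matches identical characters), and since our variant of string edit distance uses only insertions and deletions, $C_S(\mathcal{M}')$ equals the number of unmatched characters, i.e.\ $|\Gamma(\mathcal{M}_G^*)| + 2B$, where $B$ is the number of broken pairs. From $GED(P, Q) \le g$, both $|\Gamma(\mathcal{M}_G^*)|$ and $\sum_{(i,j) \in \mathcal{M}_G^*} dist(p_i, q_j)$ are at most $g$.

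Next I would bound $\EX[B]$ by applying Lemma~\ref{lem:RandomHighD} to each matched pair with $\Delta = g/\sqrt{n}$: the probability that $(i,j)$ is broken is at most $\sqrt{2}\cdot dist(p_i, q_j)/\Delta$, where the $\sqrt{2}$ comes from bounding the $L_1$ distance between $p_i$ and $q_j$ by $\sqrt{2}$ times their Euclidean distance. Summing over matched pairs and using the distance bound gives $\EX[B] \le \sqrt{2}\, g/\Delta = \sqrt{2}\sqrt{n}$, hence
\[
\EX[C_S(\mathcal{M}')] \le g + 2\sqrt{2}\sqrt{n} \le g + 6\sqrt{n}.
\]

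Finally, Markov's inequality yields $C_S(\mathcal{M}') \le 2g + 12\sqrt{n}$ with probability at least $1/2$, and in that event the optimal string matching $\mathcal{M}_S^*$ returned by $SED$ has cost at most $C_S(\mathcal{M}') \le 12\sqrt{n} + 2g$, so the call in line~\ref{alg2line:callSED} succeeds. Because the $\lceil c \lg n \rceil$ iterations use independent grid shifts, the probability that they all fail is at most $(1/2)^{c \lg n} = 1/n^c$. The main obstacle is picking the grid side length $\Delta = g/\sqrt{n}$ so that the expected number of broken pairs, $O(g/\Delta)$, exactly matches the $O(\sqrt{n})$ slack permitted by the threshold $12\sqrt{n} + 2g$; once this scale is fixed, everything else is linearity of expectation and standard probability amplification.
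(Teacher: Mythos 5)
Your proof is correct and follows essentially the same route as the paper: unmatch the pairs whose points land in different grid cells to obtain a valid string matching, bound its expected string cost by $g + O(\sqrt{n})$ via Lemma~\ref{lem:RandomHighD} with $\Delta = g/\sqrt{n}$, then apply Markov's inequality and amplify over the $\lceil c\lg n\rceil$ independent grid shifts. The only (cosmetic) difference is that the paper splits the different-cell pairs into far pairs (at most $\sqrt{n}$ of them, deterministically) and close pairs (bounded in expectation), whereas you bound all broken pairs in expectation at once, which is fine since the linear bound $\sqrt{2}\,dist(p_i,q_j)/\Delta$ upper-bounds the separation probability even when it exceeds $1$.
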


	\begin{proof}
		Let $\mathcal{M}$ be a monotone matching, and let $UM_{\mathcal{M}}$ be the set of unmatched indices.
	There are four subsets of pairs in $\mathcal{M}$: 
	\begin{itemize}
		\item $OC_\mathcal{M}$: In each pair, both indices' points fall into One cell, and the distance between the two points is less and equal to $\Delta = \frac{g}{\sqrt{n}}$ (Close). 
		\item $OF_\mathcal{M}$: In each pair, both indices' points fall into One cell, and the distance between the two points is larger than $\Delta =\frac{g}{\sqrt{n}}$ (Far). 
		\item $DC_\mathcal{M}$:	In each pair, the indices' points are in Different cells, and the distance between the two points is less and equal to $\Delta =\frac{g}{\sqrt{n}}$ (Close). 
		\item $DF_\mathcal{M}$: In each pair, the indices' points are in Different cells and the distance between the two points is larger than $\Delta =\frac{g}{\sqrt{n}}$ (Far). 
	\end{itemize}
	These sets are disjoint, so
	\begin{align}
		\label{equ:costGEDinGeo}
		C_G(\mathcal{M}_G^*) = &|UM_{\mathcal{M}_G^*}| + \sum_{(i, j)\in OC_{\mathcal{M}_G^*}} dist(p_i, q_j) + \sum_{(i, j)\in OF_{\mathcal{M}_G^*}} dist(p_i, q_j) \nonumber
		\\&\qquad +\sum_{(i, j)\in DC_{\mathcal{M}_G^*}} dist(p_i, q_j) + \sum_{(i, j)\in DF_{\mathcal{M}_G^*}} dist(p_i, q_j).
	\end{align}
	Recall that there is no substitution operation in our version of the string edit distance.
  Therefore, to better understand optimal matchings for string edit distance, we unmatch all the
  pairs in $DC_{\mathcal{M}_G^*}$ and $DF_{\mathcal{M}_G^*}$, forming a new matching
  $\mathcal{M}_G^{*'}$. 
	Points in one cell are regarded as identical characters while those in different cells are different characters. 
	Therefore, 
	\begin{align*}
		C_S(\mathcal{M}_G^{*'}) &= |UM_{\mathcal{M}_G^*}| + 0 \cdot (|OC_{\mathcal{M}_G^*}| + |OF_{\mathcal{M}_G^*}|) + 2 \cdot (|DC_{\mathcal{M}_G^*}| + |DF_{\mathcal{M}_G^*}|) 
		\\&= |UM_{\mathcal{M}_G^*}| + 2\cdot(|DC_{\mathcal{M}_G^*}| + |DF_{\mathcal{M}_G^*|}).
	\end{align*}
	Observe that there are at most $\frac{g}{g/\sqrt{n}} = \sqrt{n}$ pairs in $DF_{\mathcal{M}_G^{*}}$ if $C_G({\mathcal{M}_G^{*}})\leq g$. 
	Therefore,
	\begin{align}
		C_S(\mathcal{M}_S^*) &\leq C_S(\mathcal{M}_G^{*'}) \nonumber
		\\&=|UM_{\mathcal{M}_G^*}| + 2|DC_{\mathcal{M}_G^*}| + 2|DF_{\mathcal{M}_G^*}| \leq g+2\sqrt{n} + 2|DC_{\mathcal{M}_G^*}|  \label{equ:uboptimalSt}
	\end{align}
	For any two points $p_i, q_j$, let $P_{D}(i, j)$ be the probability that $p_i$ and $q_j$ are assigned into different cells. 
	From Lemma \ref{lem:RandomHighD}, we can infer \( P_{D}(i, j) \leq \frac{2dist(p_i,
		q_j)}{g/\sqrt{n}}\).
	
	Then,
	\begin{align}
		\EX(|DC_{\mathcal{M}_G^*}|)  
		&\leq\sum_{(i, j) \in DC_{\mathcal{M}_G^*}}P_D(i, j) 
		\leq \sum_{(i, j) \in DC_{\mathcal{M}_G^*}} \frac{2dist(p_i, q_j)}{g/\sqrt{n}} \label{equ:expectedValue}
		\\&\leq 2\sqrt{n}. \nonumber
	\end{align}
	Therefore, 
	\[
	\EX(C_S(\mathcal{M}_S^*)) \leq 6\sqrt{n} + g.
	\]
	By Markov's inequality, 
	\[
	P[C_S(\mathcal{M}_S^*) \geq 12\sqrt{n} + 2g] \leq \frac{1}{2}.
	\]
	In other words, $SED(S, T, 12\sqrt{n} + 2g)$ will fail with probability at most $\frac{1}{2}$ if $GED(P, Q)\leq g$. 
	So, if we test $SED(S, D, 12\sqrt{n} + 2g)$ $\lceil c\lg n\rceil$ times, at least one iteration will return a value if $GED(P, Q)\leq g$ with a probability greater than or equal to
	\[
	1 - \prod_{1}^{\lceil c\lg n\rceil} P[C_S(\mathcal{M}_S^*)\geq 12\sqrt{n} + 2g] \geq 1-\prod_{1}^{\lceil c\lg n\rceil} \frac{1}{2} = 1-\frac{1}{n^c}.
	\]
	We conclude the proof of Lemma \ref{lem:alg1corr}. 
	\end{proof}

  According to Lemma~\ref{lem:alg1corr}, with high probability, we obtain a matching a
  $\mathcal{M}_S^*$ such that $C_S(\mathcal{M}_S^*)\leq 12\sqrt{n} + 2g$ if \(GED(P, Q) \leq g\).
	
	We now consider $C_G(\mathcal{M}_S^*)$. Again,  $UM_{\mathcal{M}}$ is the set of unmatched indices for a matching $\mathcal{M}$. Observe, for all $(i, j)\in \mathcal{M}_S^*$, points $p_i$ and $q_j$ lie in the same grid cell. Therefore, $dist(p_i, q_j)\leq \frac{\sqrt{2}g}{\sqrt{n}}$ if $(i, j)\in \mathcal{M}_S^*$.
	We have: 
	\begin{align}
	C_G(\mathcal{M}_S^*) &= |UM_{\mathcal{M}_S^*}| + \sum_{(i, j)\in \mathcal{M}_S^*} dist(p_i, q_j) \label{equ:costStringinGeo}
	\\&\leq 12\sqrt{n} + 2g + n\cdot(\frac{\sqrt{2}g}{\sqrt{n}}) = 12\sqrt{n} + 2g + \sqrt{2}g\sqrt{n}\nonumber
	\end{align}
	If $GED(P, Q)\leq \sqrt{n}$, then, with high probability, we obtain a matching $\mathcal{M}_S^*$
  by the end of the outer for loop iteration where $g \geq GED(P, Q)\geq \frac{1}{2}g$. The cost of this matching is at most $12\sqrt{n} + 2g + \sqrt{2}g\sqrt{n} = O(\sqrt{n})GED(P, Q)$. The same approximation bound holds if $GED(P, Q)>\sqrt{n}$, whether or not we find a matching during the outer for loop. We conclude the proof of Theorem \ref{the:alg1}. 

	\section{$O(\alpha)$-Approximation for GED}
	\label{sec:trade-off}

	We now discuss our $O(\alpha)$-approximation algorithm for any $\alpha\in[\sqrt{\log n}, \sqrt{n/\log n}]$. A natural approach for extending our $O(\sqrt{n})$-approximation is using the same reduction to string edit distance but letting the cell's side length be a variable depending on the approximation factor $\alpha$.
  We argue, however, that this approach does not lead to a good approximation.
	\subsection{\textbf{Flaws in modifying the $O(\sqrt{n})$-approximation to achieve a tradeoff}}
	\label{subsec:flaws}
  Suppose we try to modify the \(O(\sqrt{n})\)-approximation algorithm by simply changing the side
  length of cells.
	Let $\Delta_\alpha$ be the cell's side length which depends on the approximation factor $\alpha$.
  We need to obtain a matching \(\mathcal{M}_S^*\) with high probability such that
  $C_G(\mathcal{M}_S^*) \leq g\cdot O(\alpha)$ during any iteration of the outer for loop with
  \(GED(P, Q) \leq g\).
	
	There can be at most $n$ matched pairs in $\mathcal{M}_S^*$. Following
  \eqref{equ:costStringinGeo}, we need $n\cdot\Delta_\alpha \leq g\cdot O(\alpha)$, implying
	\[
	\Delta_\alpha \leq O(\frac{g \alpha}{n}).
	\]
	
	On the other hand, we have $C_S(\mathcal{M}_S^*)\leq C_G(\mathcal{M}_S^*)\leq g\cdot O(\alpha)$ in our analysis. We then derived $2\sqrt{n}$\ in \eqref{equ:uboptimalSt} as $2\frac{g}{\Delta_\alpha}$.
	We now need $2\frac{g}{\Delta_\alpha}\leq g\cdot O(\alpha)$, implying 
	\begin{align*}
	\Delta_\alpha\geq \Omega(\frac{1}{\alpha}).
	\end{align*}
	
	This is fine for $\alpha = \sqrt{n}$ or for large values of \(g\). But for small $\alpha$ and small \(g\), we cannot have both inequalities be true.
  Therefore, we require a different approach to obtain our desired approximation factor-running time
  tradeoff.
	\subsection{\textbf{$O(\alpha)$-approximation algorithm based on grid-snapping}}
	\paragraph{\textbf{Grid-snapping.}}
    Instead of simply grouping points by their different cells as in the $O(\sqrt{n})$-approximation
    algorithm, we snap points to the lower left corners of their respective grid cells.
	Let $P' = <p'_1, ..., p'_n>$, $Q'=<q'_1, ..., q'_n>$ be the sequences after grid-snapping.
  Let \(\Delta\) be the cell side length of the grid.
  We immediately obtain the following observation:
\begin{observation}
	\label{ob1}
	For any $p_i$ and $q_j$ from \(P\) and \(Q\), respectively, we have $dist(p'_i, q'_j) \leq
  dist(p_i, q_j) + 2\sqrt{2}\Delta$. Moreover, if $p_i$ and $q_j$ are in different cells,
  \(dist(p'_i, q'_j)\geq \Delta\).
\end{observation}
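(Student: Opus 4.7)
The plan is to handle the two inequalities separately; both follow directly from the geometry of the grid and the triangle inequality.

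For the first inequality, I would apply the triangle inequality:
\[
dist(p'_i, q'_j) \leq dist(p'_i, p_i) + dist(p_i, q_j) + dist(q_j, q'_j).
\]
Since $p'_i$ is the lower-left corner of the grid cell containing $p_i$, the point $p'_i$ and the point $p_i$ lie in a common axis-aligned square of side length $\Delta$, so $dist(p'_i, p_i)$ is at most the diagonal $\sqrt{2}\Delta$. The same argument gives $dist(q_j, q'_j) \leq \sqrt{2}\Delta$. Combining these bounds yields $dist(p'_i, q'_j) \leq dist(p_i, q_j) + 2\sqrt{2}\Delta$.

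For the second inequality, I would observe that the lower-left corners of all grid cells form a translated copy of the integer grid scaled by $\Delta$ (after accounting for the random shift $b$). Consequently, the coordinates of any two distinct corners differ by an integer multiple of $\Delta$ in each dimension, and by at least $\Delta$ in at least one dimension. If $p_i$ and $q_j$ lie in different cells, then $p'_i$ and $q'_j$ are two such distinct corners, so $dist(p'_i, q'_j) \geq \Delta$.

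There is no real obstacle here; the observation is essentially immediate from the triangle inequality plus the fact that distinct lattice corners are separated by at least $\Delta$. The main thing to be careful about is ensuring the constants in the first inequality correctly account for both snappings (hence the factor $2\sqrt{2}$ rather than $\sqrt{2}$), and noting that the second bound uses Euclidean (not $L_\infty$) distance, for which a one-coordinate separation of $\Delta$ already suffices.
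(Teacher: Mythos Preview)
Your proposal is correct and matches the paper's treatment: the paper states this observation without proof, calling it immediate after defining grid-snapping, and your argument via the triangle inequality plus the $\Delta$-separation of distinct grid corners is exactly the intended justification.
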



 We can then obtain our $O(\alpha)$-approximation algorithm by altering the bound in the outer loop and the test procedure of Algorithm \ref{alg:subliner}.
	See Algorithm~\ref{alg:tradeoff}.
  Here, $AGED(P', Q', k)$ attempts to Approximate $GED(P', Q')$ given that \(P'\) and \(Q'\) have
  their points on the corners of the grid cells.
	If $GED(P', Q') \leq k$, then it returns an $O(1)$-approximate matching for \(GED(P', Q')\).
	Otherwise, it either returns an \(O(1)\)-approximate matching or it returns false.
	
	\begin{algorithm}[h]
		\SetAlgoLined
		\LinesNumbered
		\KwIn{Point sequences $P$ and $Q$}
		\KwOut{An approximately optimal matching for GED}
		\BlankLine 
		\eIf{$\sum_{i=1}^{n} dist(p_i, q_{i}) \leq 1$}{
			return matching $\{(1, 1), ..., (n, n)\}$
		}{
			\For{$i := 0$ to $\lceil\lg\frac{n}{\alpha}\rceil$}{
				$g := 2^i$
				\\\For{$j:=1$ to $\lceil c\lg n\rceil$}{
				    Obtain $P'$, $Q'$ by doing grid-snapping to $P$, $Q$ based on a randomly shifted grid
					\\$out := AGED\left(P', Q', (4\sqrt{2}+6)g\right)$
					\\\If{$out\neq false$}{
						return out 
					}
				}
			}
			Return the empty matching
		}
		\caption{$O(\alpha)$-approximation algorithm}
		\label{alg:tradeoff}
		\end{algorithm}
	
	We describe how to implement \(AGED(P', Q', k)\) in Section \ref{sec:SGED}. The running time of our implementation is $O(n + \frac{k^2}{\Delta})$ where \(\Delta\) is the cell side length of the grid.  
	We do grid snapping in $O(n)$ time.
  For each \(g = 2^i\), we use cells of side length \(\frac{g\alpha}{n}\) and set \(k\) to $\left(4\sqrt{2} + 6\right)g$, so the overall running time of our \(O(\alpha)\)-approximation algorithm is 
	\begin{align*}
	O(n)+\sum_{i=0}^{\lceil \lg\frac{n}{\alpha}\rceil }\sum_{j=1}^{\lceil c\lg n \rceil} O(n+\frac{2^i n}{\alpha})& = \sum_{i=0}^{\lceil \lg\frac{n}{\alpha}\rceil} O(n\log n + \frac{2^in}{\alpha}\log n) 
	=O(n\log^2n+\frac{n^2}{\alpha^2}\log n).
	\end{align*}
	
  Notice that if $\alpha < \sqrt{\log n}$, the running time of our algorithm is $O(n^2)$. Thus, we
  could just run the classic $O(n^2)$ dynamic programming algorithm if we need an approximation
  factor to be less than $\sqrt{\log n}$.
  On the other hand, the $O(n\log^2 n)$ in the running time is asymptotically insignificant if
  $\alpha \leq \sqrt{n/\log n}$.
  As a result, the running time is $O(\frac{n^2}{\alpha ^2}\log n)$ for any $\alpha \in [\sqrt{\log
  n}, \sqrt{n/\log n}]$. 
	
	\subsection{\textbf{Proof of correctness} }
	\label{subsec:alg2corr}
	The analysis for the $O(\alpha)$-approximation algorithm is similar to the first algorithm.
  First, we introduce some additional notation to that used in Section \ref{subsec:alg1corr}.
	
	Let $C_{GS}(\mathcal{M})$ be the cost of any monotone matching $\mathcal{M}$ using distances between
	the grid-snapped points of \(P'\) and \(Q'\).
	Let $\mathcal{M}_{GS}^*$ be the optimal matching for $P'$ and $Q'$, i.e., $C_{GS}(\mathcal{M}_{GS}^*) = GED(P', Q')$.
	Let $\mathcal{M}_{AGS}$ be the matching returned by $AGED(P', Q', (4\sqrt{2}+6)g)$.
We have the following lemma.
	\begin{lemma}
		\label{lem:alg2corr}
		If $GED(P, Q)\leq g$, with a probability at least $1-\frac{1}{n^c}$, at least one of the $\lceil c\lg n \rceil$ iterations will return a matching $\mathcal{M}_{AGS}$.
	\end{lemma}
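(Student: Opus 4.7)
The plan is to mirror the structure of the proof of Lemma~\ref{lem:alg1corr}. I will first show that in a single iteration of the inner for loop, with probability at least $1/2$, we have $GED(P', Q') \leq (4\sqrt{2}+6)g$, which guarantees that $AGED(P', Q', (4\sqrt{2}+6)g)$ returns a valid matching. I will then apply the standard independent-repetition argument over the $\lceil c \lg n \rceil$ inner iterations to boost the success probability to $1 - 1/n^c$.

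To bound $GED(P', Q') \leq C_{GS}(\mathcal{M}_G^*)$, I plan to evaluate the optimal geometric matching $\mathcal{M}_G^*$ in the snapped metric. Following the notation of Section~\ref{subsec:alg1corr}, I partition the pairs of $\mathcal{M}_G^*$ into the four sets $OC$, $OF$, $DC$, $DF$ according to whether the two matched points fall into the same cell of the randomly shifted grid (O versus D) and whether the original Euclidean distance is at most $\Delta$ or larger (C versus F). Pairs in $OC$ and $OF$ lie in a common cell, so after snapping $p'_i = q'_j$ and their contribution to $C_{GS}(\mathcal{M}_G^*)$ is zero. For pairs in $DF$, the original distance exceeds $\Delta$, so Observation~\ref{ob1} yields a snapped distance at most $(1 + 2\sqrt{2}) \cdot dist(p_i, q_j)$; combined with $\sum_{(i,j) \in DF} dist(p_i, q_j) \leq g$, this gives a deterministic bound of $(1+2\sqrt{2})g$. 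Together with the bound $|UM_{\mathcal{M}_G^*}| \leq g$ on unmatched gap points, the non-random part of $C_{GS}(\mathcal{M}_G^*)$ is at most a small constant times $g$.

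The main obstacle is bounding the random contribution from the $DC$ pairs, since $|DC|$ depends on the grid shift. For each pair $(i,j) \in \mathcal{M}_G^*$ with $dist(p_i, q_j) \leq \Delta$, Lemma~\ref{lem:RandomHighD} implies that the probability the pair is split into different cells is at most $O(dist(p_i, q_j)/\Delta)$. Summing over all such pairs and using $\sum dist(p_i, q_j) \leq g$ yields $\EX[|DC|] = O(g/\Delta)$. Moreover, because any two points with original distance at most $\Delta$ can differ by at most one unit in each grid coordinate, each $DC$ pair contributes snapped distance at most $\sqrt{2}\Delta$, so the expected $DC$-contribution is $O(g)$. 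Adding the four contributions and tracking constants carefully yields $\EX[C_{GS}(\mathcal{M}_G^*)] \leq (2\sqrt{2} + 3)g$, and Markov's inequality gives $C_{GS}(\mathcal{M}_G^*) \leq (4\sqrt{2} + 6)g$ with probability at least $1/2$ in a single iteration. Independent repetition over the inner loop then finishes the proof exactly as in Lemma~\ref{lem:alg1corr}.
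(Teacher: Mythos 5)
Your proposal follows the paper's proof essentially verbatim: the same $OC/OF/DC/DF$ decomposition of $\mathcal{M}_G^*$ evaluated under the snapped metric, the same use of Observation~\ref{ob1} and Lemma~\ref{lem:RandomHighD} to bound $\EX\left[C_{GS}(\mathcal{M}_G^*)\right]$ by $O(g)$, then Markov's inequality and independent repetition over the $\lceil c\lg n\rceil$ inner iterations, invoking the guarantee of $AGED$ exactly as the paper does. The only difference is constant bookkeeping (you charge $\sqrt{2}\Delta$ per $DC$ pair and fold the $2\sqrt{2}\Delta$ term on $DF$ into $(1+2\sqrt{2})\,dist(p_i,q_j)$, whereas the paper counts $|DF|\leq g/\Delta$), which does not change the substance of the argument.
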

	\begin{proof}
		
    Recall, we said pairs of points are \emph{close} if their distance is less than or equal to
    \(\Delta\).
		Similar to \eqref{equ:costGEDinGeo}, and using Observation \ref{ob1}, we have
		\begin{align*}
		C_{GS}(\mathcal{M}_G^{*}) &= |UM_{\mathcal{M}_G^*}| + 0 \cdot (|OC_{\mathcal{M}_G^*}| + |OF_{\mathcal{M}_G^*}|) 
		\\&\qquad +\sum_{(i, j)\in DC_{\mathcal{M}_G^*}} dist(p'_i, q'_j) + \sum_{(i, j)\in DF_{\mathcal{M}_G^*}} dist(p'_i, q'_j)
		\\&\leq |UM_{\mathcal{M}_G^*}| + \Delta\cdot |DC_{\mathcal{M}_G^*}| + \sum_{(i, j)\in DF_{\mathcal{M}_G^*}} \left(dist(p_i, q_j)+2\sqrt{2}\Delta\right).
		\\&= |UM_{M^*_{G}}| +  \sum_{(i, j)\in DF_{\mathcal{M}_G^*}} dist(p_i, q_j) +  \Delta |DC_{M^*_{G}}|+2\sqrt{2}\Delta|DF_{M^*_{G}}|
		\end{align*}
		If $C_G(\mathcal{M}_G^*)\leq g$, then
    \[
		C_{GS}(\mathcal{M}_{GS}^*)\leq g +  \Delta|DC_{\mathcal{M}_G^*}|+ 2\sqrt{2}\Delta |DF_{\mathcal{M}_G^*}|.
    \]
		We have the same observation for $DF_{M_{G}^*}$ as before, that is there are at most $\frac{g}{\Delta}$ pairs in $DF_{M_{G}^*}$. Using the same algebra as \eqref{equ:expectedValue}, we have $\EX(|DC_{\mathcal{M}_G^*}|)\leq \frac{2g}{\Delta}$.
		So, 
		\begin{align*}
		\EX(C_{GS}(\mathcal{M}_{GS}^*)) \leq g + \Delta\cdot \frac{g}{\Delta} + 2\sqrt{2}\Delta \frac{2g}{\Delta}= 2\sqrt{2}g+3g.
		\end{align*}
		According to Markov's inequality, we know
		\begin{align*}
		P\left(C_{GS}({\mathcal{M}_{GS}^*})\geq \left(4\sqrt{2}+6\right)g\right) \leq \frac{1}{2}.
		\end{align*}
		In Section \ref{sec:SGED}, we prove that if $C_{GS}(\mathcal{M}_{GS}^{*}) = GED(P', Q') \leq (4\sqrt{2}+6)g$, then $AGED(P',Q', (4\sqrt{2}+6)g)$ will return a constant approximate matching $\mathcal{M}_{AGS}$.
		So, if we test $AGED(P', Q', (4\sqrt{2}+6)g)$ $\lceil c\lg n\rceil$ times (using different grids each time), with a probability at least $1-\frac{1}{n^c}$, at least one $AGED(P', Q', (4\sqrt{2}+6)g)$ will return a matching $\mathcal{M}_{AGS}$. We conclude the proof of Lemma \ref{lem:alg2corr}. 
	\end{proof}
	
	Finally, from Observation~\ref{ob1}, for every pair \((i, j)\) in $\mathcal{M}_{AGS}$, we have $dist(p_i,
	q_j)\leq dist(p'_i, q'_j)+2\sqrt{2}\Delta$.
	We can now return points to their original positions:
	\begin{align*}
	C_G(\mathcal{M}_{AGS}) &= |UM_{\mathcal{M}_{AGS}}| + \sum_{(i,j)\in DC_{\mathcal{M}_{AGS}}} dist(p_i, q_j) + \sum_{(i,j)\in DF_{\mathcal{M}_{AGS}}} dist(p_i, q_j) 
	\\&\qquad+ \sum_{(i,j)\in OC_{\mathcal{M}_{AGS}}} dist(p_i, q_j) + \sum_{(i,j)\in OF_{\mathcal{M}_{AGS}}} dist(p_i, q_j)
	\\&\leq |UM_{\mathcal{M}_{AGS}}| + \sum_{(i,j)\in DC_{\mathcal{M}_{AGS}}} dist(p'_i, q'_j) + \sum_{(i,j)\in DF_{\mathcal{M}_{AGS}}} dist(p'_i, q'_j) + \sum_{(i,j)\in OC_{\mathcal{M}_{AGS}}} dist(p'_i, q'_j) 
	\\&\qquad+ \sum_{(i,j)\in OF_{\mathcal{M}_{AGS}}} dist(p'_i, q'_j) + 2\sqrt{2}\Delta \left(\left|DC_{\mathcal{M}_{AGS}}\right|+\left|DF_{\mathcal{M}_{AGS}}\right|+\left|OC_{\mathcal{M}_{AGS}}\right|+\left|OF_{\mathcal{M}_{AGS}}\right|\right)
	\\&\leq O(1)\cdot  (4\sqrt{2}+6)g+ n\cdot 2\sqrt{2}\Delta.
	\end{align*}
	
	Recall, $\Delta=\frac{g\alpha}{n}$.
	If we obtain a matching \(\mathcal{M}_{AGS}\) during an iteration where \(g \geq 
	C_G(M_G^{*})=GED(P, Q)\geq \frac{1}{2}g\), then $C_G(\mathcal{M}_{AGS})\leq O(g\alpha) = O(\alpha)\cdot GED(P, Q)$.
  Finishing with the same argument as in Theorem \ref{the:alg1}, we conclude our proof of Theorem
  \ref{the:alg2}.

	\section{Constant Approximation Algorithm $AGED(P', Q', k)$}
  Recall that our constant factor approximation algorithm for GED of grid corner points is based on
  a known $O(n+k^2)$ time exact algorithm for string edit distance~\cite{Landau1998IncrementalSC}.
  We first describe this exact algorithm for strings, which
  we refer as $SED(S, T, k)$, in Section \ref{sec:StringAlgorithm}. Then in Section \ref{sec:SGED},
  we modify this string algorithm to obtain an $O(1)$-approximate matching for edit distance
  between point sequences $P'$ and $Q'$ assuming the points lie on the corners of grid cells and
  $GED(P', Q')\leq k$.
	\subsection{\textbf{The exact $O(n+k^2)$ string edit distance algorithm}}
	\label{sec:StringAlgorithm}
	
	\paragraph{\textbf{Dynamic programming matrix and its properties.}} 
	\label{subsec:DPmatrix}
	
	Let $S=<s_1, s_2, ... s_n>$ and $T=<t_1, t_2, ..., t_n>$ be two strings of length $n$. 
	Let $D$ be the $(n+1)\times (n+1)$ matrix where $D(i, j )$ is the edit distance between substrings $S_i=<s_1, s_2, .., s_i>$ and $T_j=<t_1, t_2, ..., t_j>$.
	We give a label $h$ to every diagonal in this matrix such that for any entry $(i, j)$ in this diagonal, $j = i+h$. 
	See Fig. \ref{Fig1} (a). 
	\begin{figure}[h]
		\centering
		\includegraphics[scale=0.5]{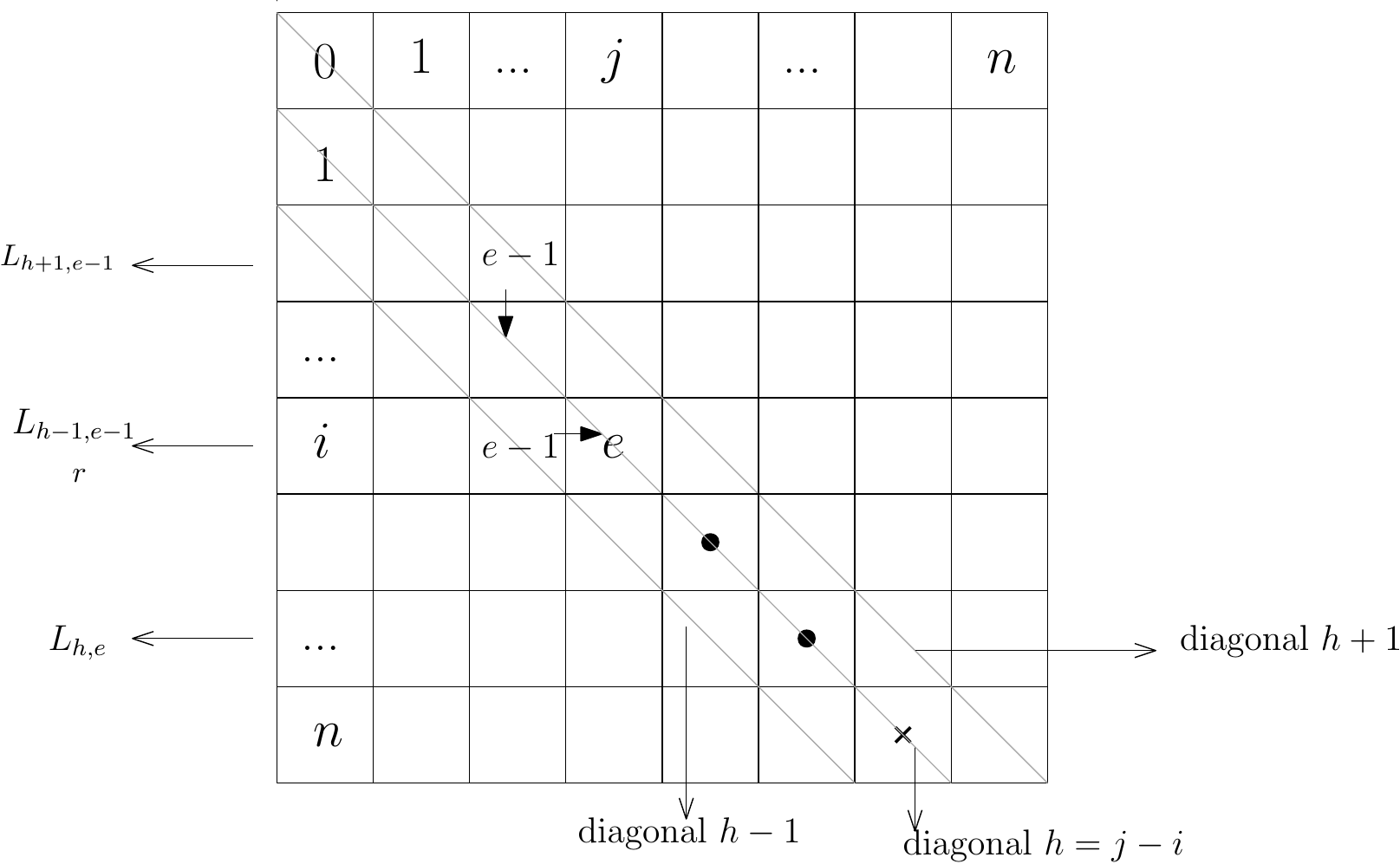}
		\caption{(a) The diagonal containing  any entry $(i, i+h)$ is diagonal $h$
			(b)
			The algorithm slides down the diagonal until finding an entry representing distinct characters.
			A circle means the corresponding two characters are the same; a cross means they are different}
		\label{Fig1}
	\end{figure}
	
	Recall, we aim to minimize only the number of insertions and deletions to turn $S$ into $T$.
	There are four important properties in this matrix which are used in the $O(n+k^2)$ time algorithm. 
	
	\begin{property}
		\label{prop:recursive}
		$
		D(i, j) =\min\begin{cases}
		D(i-1, j) +1  \\
		D(i, j-1) +1 \\
		D(i-1, j-1) + |s_it_j|  
		\end{cases}
		$
		where $|s_it_j| = \begin{cases}
		0, & \text{if } s_i = t_j\\
		\infty, & \text{otherwise}
		\end{cases}.$
	\end{property}
	
	\begin{property}
		\label{prop:init}
		$D(i, 0)  = i$, and $D(0, j)  = j$.
	\end{property}
	
	\begin{property} 
		\label{prop:oddeven}
		$D(i, i+h)$ is even if and only if $h$ is even. 
	\end{property}
	
	\begin{property}
		\label{prop:montone}
		$D(i, j) - D(i-1, j-1) \in \{0, 2\}$.
	\end{property}
	
	Property \ref{prop:montone} can be easily derived from Property \ref{prop:oddeven} and induction on i + j (see Lemma 3 of \cite{Ukkonen:1985:AAS:4620.4626}).
	From Property \ref{prop:montone}, we know all the diagonals are non-decreasing. 
	In particular, all values on diagonal $h$ are greater than $|h|$ considering Property \ref{prop:init}. 
	So, we can just search the band from diagonal $-k$ to $k$ if the edit distance between $S$ and $T$ is at most $k$. 
	
	\paragraph{\textbf{Algorithm for edit distance at most $k$.}}
	\label{subsec:stringAlg}
	We use a greedy approach to fill the entries along each diagonal. 
	For each value $e\in \{0, \dots, k\}$ (the outer loop), we locate the elements whose value is $e$ by inspecting diagonals $-e$ to $e$ (the inner loop). 
	Finally, we return the best matching if $D(n, n)$ is covered by the above search. 
	Otherwise, the edit distance is greater than $k$.
	
	The key insight is that we can implicitly find all entries containing $e$ efficiently in each round. 
	We first define $L_{h, e}$ as the row index of the \textit{farthest} $e$ entry in diagonal $h$. 
	\begin{definition}
		\label{def:maxrow}
		$L_{h, e} = max\{i |D(i, i+h) = e\}$.
	\end{definition}
	
	Note by Property~\ref{prop:oddeven}, \(L_{h,e}\) is well-defined only if \(h \equiv e \mod 2\).
  Observe that all values on diagonal $h$ are at least $|h|$, so we can define our initial values
  as:
	\begin{align*}
		L_{h, |h|-2} =
		\begin{cases}
			|h|-1, & \text{if }h< 0;\\
			-1, & \text{otherwise} 
		\end{cases}
		\text{, where } h\in [-k, k].
	\end{align*}
	Let $r = max\{L_{h-1, e-1}, L_{h+1, e-1}+1\}$.
	Then, $D(r, r+h) = e$ by Properties \ref{prop:recursive} and \ref{prop:montone}. 
	Also, if $D(r, r + h) = e$ and $s_{r+1} = t_{r+1+h}$, then $D(r + 1, r+1+h) = e$.
	From these observations, we can compute $L_{h,e}$ in each inner loop using Algorithm
	\ref{alg:computeLde} below.
	
	\begin{algorithm}[h!]
		\SetAlgoLined
		\LinesNumbered
		\BlankLine 
		$r := max\{L_{h-1, e-1}, L_{h+1, e-1}+1\}$
		\\\While{$r+1\leq n$, $r+h+1\leq n$, and $s_{r+1} == t_{r+1+h}$}
		{ \label{alg1line:whilebegin}
			$r := r + 1$ \tcc*{slide} \label{alg1line:slide}
		} \label{alg1line:whileend}
		\eIf{$r> n \text{ or } r+h > n$}{
			$L_{h,e} := \infty$
		}{
			$L_{h,e} := r$
		}
		\caption{Computing $L_{h,e}$ in each inner loop}
		\label{alg:computeLde}
	\end{algorithm}
	
	We call lines \ref{alg1line:whilebegin} through \ref{alg1line:whileend} ``the slide''. 	It is straightforward to recover the optimal matching by using the \(L_{h,e}\) values to trace backwards through the dynamic programming matrix.
	Fig. \ref{Fig1} (b) demonstrates this process.
	
	We can perform slides in constant time each after some $O(n)$-time preprocessing at the beginning of the algorithm.
	Specifically, the length of a slide can be computed using a single lowest common ancestor query in the suffix tree of a string based on \(S\) and \(T\)~\cite{Landau1998IncrementalSC}.
	The overall running time is $O(n+k^2)$. 
	
	
	\subsection{\textbf{$O(1)$-approximation algorithm for GED of grid-snapped points}}
	\label{sec:SGED}
	\paragraph{\textbf{Notations.}}
   Similar to the string algorithm, we have a dynamic programming matrix; $D'(i,j)$ is the edit
   distance between subsequences $P'_i = <p'_1, ..., p'_i>$ and $Q'_j = <q'_1, ..., q'_j>$ of points
   at the corners of grid cells. 
	This matrix also meets Property \ref{prop:recursive} stated earlier except that we use $dist(p'_i, q'_j)$ instead of $|s_it_j|$. 
	In addition, we also have the following property which is a refinement of Property~\ref{prop:montone}.
	\begin{property}
		\label{prop:SGEDnon}
		$D'(i, j) - D'(i-1, j-1) \in [0, 2]$.
	\end{property}
	Clearly, the upper bound is 2 (just unmatch $p_i$ and $q_j$). The lower bound can be proved by induction. 
	Because the values in any diagonal are non-decreasing, we need only consider diagonals $-k$ through $k$.
	\paragraph{\textbf{(Implicit) label rules.}}
	To obtain an approximate matching for the edit distance of  snapped point sequences, we now label each entry in the dynamic programming matrix with an approximately tight lower bound on its value.
  Inspired by the string algorithm, we use non-negative integers for our labels, and the entries of
  any diagonal \(h\) only receive labels \(e\) where \(e \equiv h \mod 2\).
	Let $LA(i, j)$ be the label of entry $(i, j)$ and $L'_{h, e}$ be the row index of the farthest entry whose label is $e$ in diagonal $h$. 
	\begin{definition}
		$L'_{h, e} := \max\{i| LA(i, i +h) = e\}$.
	\end{definition}
	For each $e$ from $0$ to $k$, for each diagonal $h$ where \(h \equiv e \mod 2\), we (implicitly)
  assign label $e$ to some entries on diagonal $h$.
	\begin{enumerate}
		\item  If $h=-e$ or $e$, i.e., this is the first iteration to assign labels to this diagonal,
      then we label the very beginning entry in diagonal $h$ as $e$, i.e., if $h= -e$, $LA(-h, 0) = e$; otherwise, $LA(0, h) = e$. \label{step:init}
		\item We define a \textit{start entry} $(r, r+h)$ for each diagonal $h$. If $h=-e$ or $e$, $r$ is the row index of the first one entry in diagonal $h$ ($r=|h|$ or $0$); otherwise, $r = \max\{L'_{h-1, e-1}, L'_{h+1, e-1}+1\}$. \label{step:startCell}
    \item We assign the label $e$ to entries $(r, r+h)$ to $(r+s, r+h+s)$ where $\sum_{i =r+1}^{s}
      dist(p'_i, q'_{i+h})\leq 2$ and $\sum_{i =r+1}^{s+1} dist(p'_{i}, q'_{i+h})>2$. $L'_{h, e} :=
      r+s$. These entries correspond to a slide in the string algorithm. \label{step:slideLabel}
		\item Finally, if  $(r-1, r+h-1)$ is unlabeled, we go backward up the diagonal labeling entries as $e$ until we meet an entry that has been assigned a label previously.
		(Again, this step is implicit. As explained below, the actual algorithm only finds the \(L'_{h, e}\) entries.)
	\end{enumerate}
	Fig. \ref{fig:GeoDyMatr} illustrates our rules.
	\begin{figure}
		\centering
		\subfloat[Notations and labels for the boundary entries]{\includegraphics[scale=0.3,width=.5\linewidth]{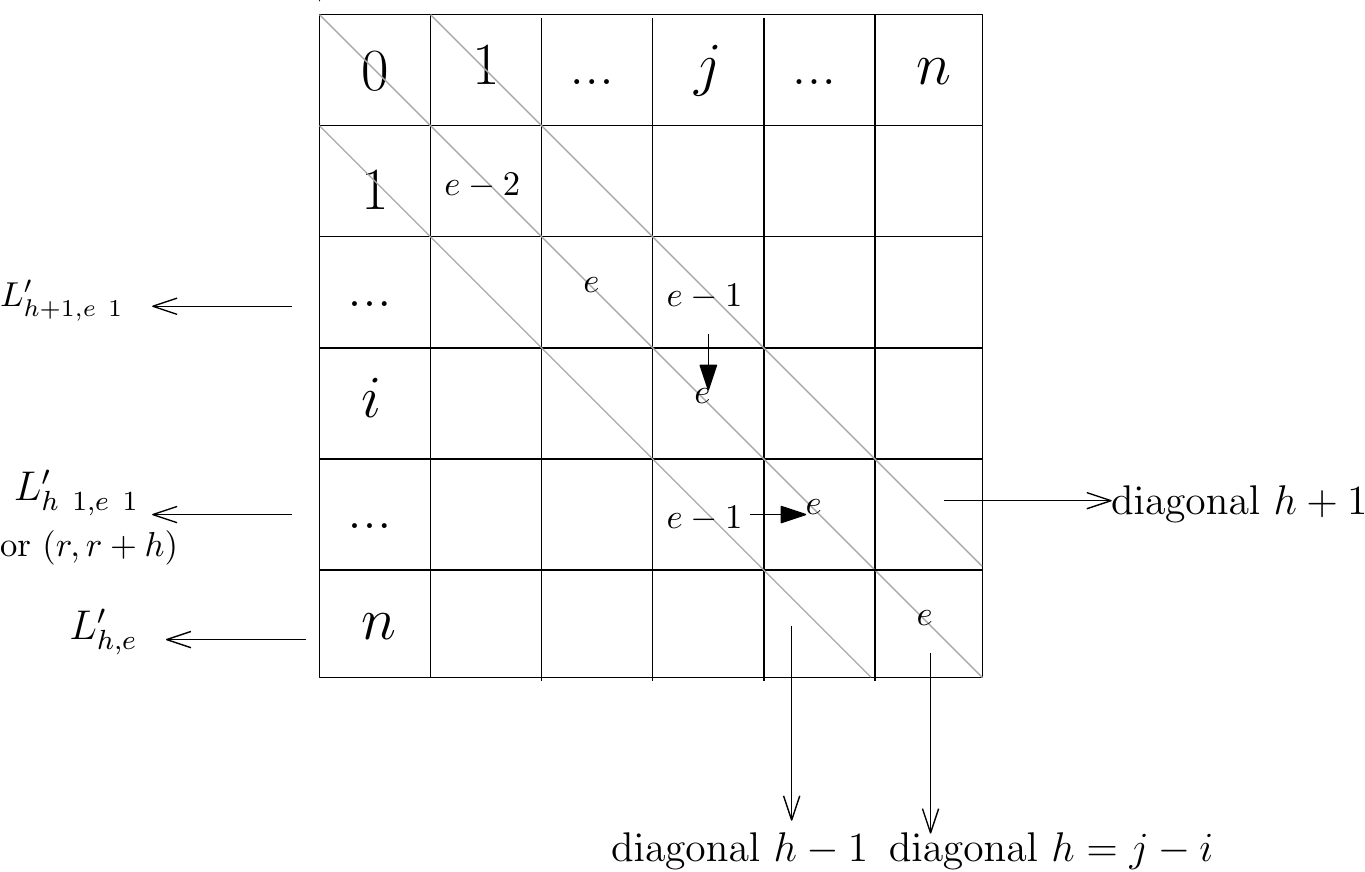}}
		~
		\subfloat[Label entries following step \ref{step:slideLabel}]{\includegraphics[scale=0.23,width= .5\linewidth]{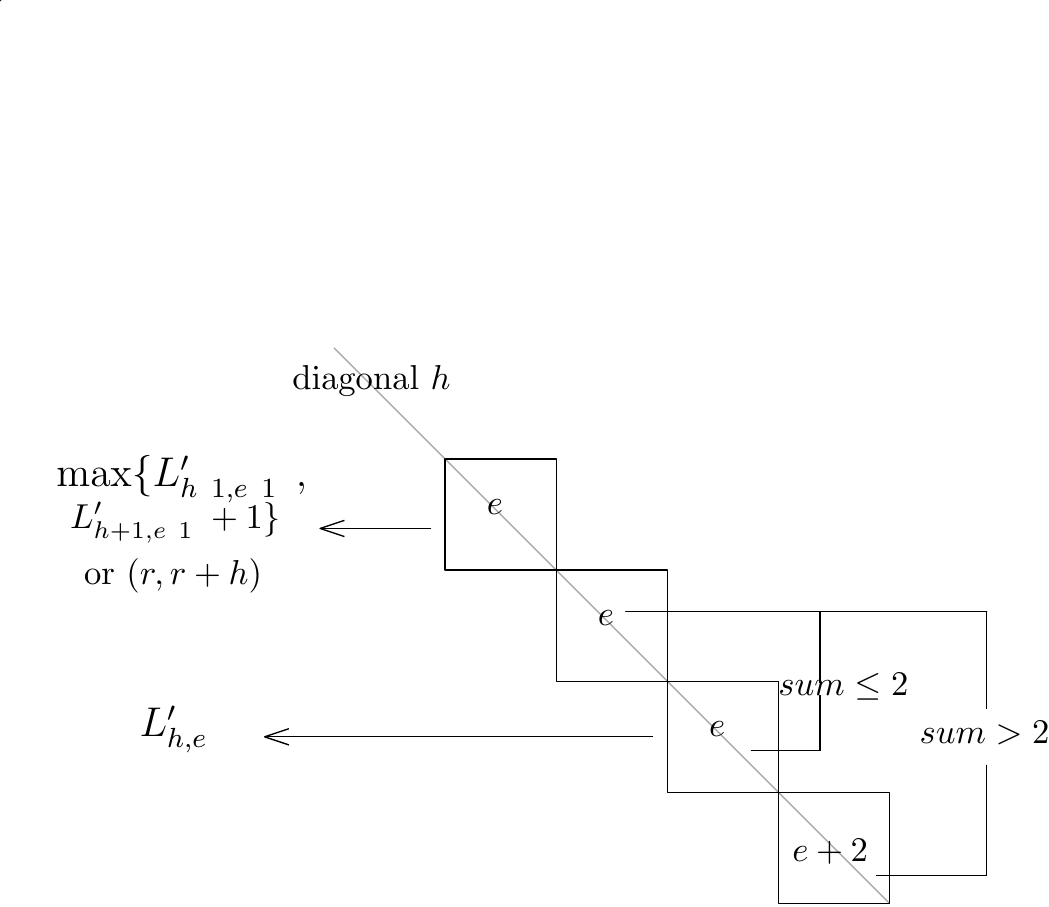}}
		\caption{Notations and rules for approximating SGED}
		\label{fig:GeoDyMatr}
	\end{figure}

\paragraph{\textbf{Computing an approximately optimal matching.}}
	Assume we have set the initial values. Our algorithm only needs to compute each $L'_{h, e}$ as before. See Algorithm \ref{alg:AppSGED}. Then, we guarantee the following theorem:
	\begin{theorem}
		\label{the: O1}
    Suppose \(GED(P', Q') \leq k\).
		We can recover a matching $M_{GS}^{*'}$ using the values $L'_{h, e}$ from Algorithm \ref{alg:AppSGED}.
    The cost of $M_{GS}^{*'}$ for point sequences $P'$, $Q'$ is less and equal to $3GED(P', Q')$.
	\end{theorem}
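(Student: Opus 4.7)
The plan is to separately establish two facts: (i) under the hypothesis $GED(P',Q')\le k$, the algorithm actually produces a label at $(n,n)$ of value at most $k$, so that a traceback is well-defined; and (ii) the traceback-recovered matching has cost bounded by $3\,GED(P',Q')$. The two facts together immediately yield the theorem.

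For (i), the goal is to prove by induction on $i+j$ that $LA(i,j)\le D'(i,j)$ for every cell of the DP matrix, where $D'(i,j)$ is the exact GED between $P'_i$ and $Q'_j$. The base cases follow from the initialization (Step~\ref{step:init} of the labelling rules). For the inductive step I would mirror the three-way DP recursion. If the optimum at $(i,j)$ is attained by a horizontal or vertical step from a cell $(i',j')$ on an adjacent diagonal, the start-entry rule $r=\max\{L'_{h-1,e-1},L'_{h+1,e-1}+1\}$ together with the backward-labelling Step~\ref{step:slideLabel} guarantees $LA(i,j)\le LA(i',j')+1\le D'(i',j')+1=D'(i,j)$. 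If the optimum is attained by a diagonal step from $(i-1,j-1)$, then either the greedy slide at label $LA(i-1,j-1)$ on diagonal $j-i$ still has cumulative distance $\le 2$ after including the pair $(i,j)$, in which case $LA(i,j)\le LA(i-1,j-1)\le D'(i-1,j-1)\le D'(i,j)$, or it does not; in the latter case one falls back on an indirect transition via a neighbouring diagonal and charges the slack against $dist(p'_i,q'_j)$, which is added to $D'$ at this step. Since the input is grid-snapped, a non-zero $dist(p'_i,q'_j)$ is at least $\Delta$, which is crucial for the charging to absorb the extra label increments.

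For (ii), I would examine the structure of the traceback when $LA(n,n)=e^*\le k$. Unrolling the definition of the start entries, the traceback consists of exactly $e^*$ horizontal/vertical edges, each contributing one unmatched point (cost $1$ apiece), interleaved with $e^*+1$ slide segments, each consisting of a run of matched pairs whose total distance is at most $2$ by the slide rule of Step~\ref{step:slideLabel}. Thus the cost of $\mathcal{M}_{GS}^{*'}$ is at most $e^*+2(e^*+1)=3e^*+2$. Combining with (i) gives $C_{GS}(\mathcal{M}_{GS}^{*'})\le 3\,GED(P',Q')+O(1)$; the additive slack can be absorbed into the factor $3$, e.g.\ by observing that the last slide ending at $(n,n)$ and the initial slide starting at $(0,0)$ cannot both be full (otherwise the optimum would already contain the corresponding matched distance, which is charged to $GED$), or by handling the tiny-$GED$ regime via the $\sum dist(p_i,q_i)\le 1$ check that precedes the call to $AGED$ in Algorithm~\ref{alg:tradeoff}.

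The part I expect to be the main obstacle is the induction in (i) in the diagonal-transition case when $dist(p'_i,q'_j)$ is small yet the greedy slide at label $LA(i-1,j-1)$ has already exhausted its budget of $2$. Here the label strictly increases while $D'$ grows only slightly, so a naive bound would fail; the proof must use the grid-snapping lower bound $dist(p'_i,q'_j)\ge \Delta$ together with the backward-labelling rule to show that the algorithm always has an alternative predecessor (on an adjacent diagonal) whose label, plus one, suffices. Making this amortization precise, while ruling out pathological cumulative arrangements of slide budgets along long diagonal runs, is the technical heart of the argument.
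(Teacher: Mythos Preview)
Your two-part decomposition matches the paper, and part (ii) is essentially what the paper does: it bounds the number of unmatched points by $LA(n,n)$ and the total matched distance by $2\cdot LA(n,n)$, giving $C_{GS}(\mathcal{M}_{AGS})\le 3\,LA(n,n)\le 3\,GED(P',Q')$. (Your count of $e^*+1$ slides versus the paper's implicit $e^*$ is a harmless discrepancy of an additive constant.)

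Part (i), however, diverges in a way that creates exactly the obstacle you flag. The paper does \emph{not} induct on $i+j$; it inducts on the label value~$e$, and for each $e$ it only needs to show that the \emph{first} entry on each diagonal receiving label~$e$ has $D'\ge e$ (Property~\ref{prop:SGEDnon} then propagates the bound forward). That choice of induction is what dissolves your ``main obstacle'': at the first label-$e$ entry $(f,f+h)$, the entire slide at label $e-2$ on diagonal~$h$ (from its start entry~$r$ to $f-1$) has been exhausted, so $\sum_{i=r+1}^{f} dist(p'_i,q'_{i+h})>2$. In the diagonal-recursion subcase, either some row $u>r$ on that stretch uses a horizontal/vertical predecessor (and then the inductive bound on diagonals $h\pm1$ at label $e-1$ applies), or $D'$ follows the diagonal all the way from~$r$, in which case $D'(f,f+h)=D'(r,r+h)+\sum_{i=r+1}^{f}dist(p'_i,q'_{i+h})>(e-2)+2=e$. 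The needed ``$>2$'' comes from the \emph{cumulative} distance of a whole exhausted slide, not from any single pair.

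This also means your appeal to the grid-snapping lower bound $dist(p'_i,q'_j)\ge\Delta$ is a red herring for correctness. The paper uses~$\Delta$ only in the running-time analysis (to bound how many manual additions a slide can require); Lemma~\ref{lem:constant} holds for arbitrary point sequences. Your induction on $i+j$ forces you to reason about an individual diagonal step whose distance may be arbitrarily small, and then you are stuck trying to amortize label increases against per-step distances. An amortization along those lines might be made to work, but you have not carried it out, and the cleaner route is simply to switch the induction variable to~$e$ and argue about the first labelled entry on each diagonal, as the paper does.
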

	In short, we argue each label \(LA(i, j) \leq D'(i, j)\).
	We then follow a path through the matrix as suggested by the way we pick labels in Algorithm \ref{alg:AppSGED}.
	The final matching has cost at most $3LA(n, n)$ which is less and equal to $3GED(P', Q')$.
  The full proof appears in Section \ref{subsec: O1corr}

	\begin{algorithm}[h]
		\SetAlgoLined
		\LinesNumbered
		$r := \max\{(L'_{h-1, e-1}), (L'_{h+1,e-1} + 1)\}$  \label{alg4:unmatch}
				\\$sum:=0$
				\\\While{$r + 1 \leq n$, $r+h + 1\leq n$, and $\left(sum + dist (p'_{r+1}, q'_{r+h+1}) \leq 2\right)$}{            
					$r:= r +1$ \label{alg4:match}
					\\$sum := sum + dist (p'_{r}, q'_{r+h})$
				}
				\eIf{$r > n$ or $r+h> n$}{
					$L'_{h, e} := \infty$
				}{
					$L'_{h, e} := r$
				}
		\caption{Computing $L'_{h, e}$ for the fixed $h$ and $e$}
		\label{alg:AppSGED}
	\end{algorithm}

	We conclude by discussing the time complexity for our algorithm.
	Using the same $O(n)$ preprocessing as in \cite{Landau1998IncrementalSC}, we can slide down maximal sequences of consecutive entries $(r, r+h)$ with $dist(p'_r, q'_{r+h}) = 0$ in constant time per slide.
	Let \(\Delta\) be the cell side length of the grid whose cell corners contain points of \(P'\) and
  \(Q'\).
For $dist(p'_{r}, q'_{r+h}) \neq 0$, we know $dist(p'_{r}, q'_{r+h})\geq \Delta$ from Observations
\ref{ob1}. 
Therefore, we only need to manually add distances and restart faster portions of each slide of distances summing to \(2\) a total of $\frac{2}{\Delta}$ times. 
Thus, the total running time is 
\[
O(n + \sum_{e= 0}^{k}\sum_{h=-e}^{e} \frac{1}{\Delta}) = O(n+\frac{k^2}{\Delta}).
\]
\subsection{\textbf{Proof of Theorem \ref{the: O1}}}
\label{subsec: O1corr}
We have the following properties for our labels along with the following lemma.
\begin{property}
	\label{prop:LAdifference}
	$LA(i, i+h) - LA(i+1, i+1+h) \in \{0, 2\}$.
\end{property}
\begin{property}
	\label{prop:neighbor}
	$LA(i, i+h) - LA(i-1, i+h) \in \{-1, 1\}$ and $LA(i, i+h) - LA(i, i+h-1)\in \{-1, 1\}$.
\end{property}
\begin{lemma}
	\label{lem:constant}
	For every entry $(i, j)$ assigned a label, $LA(i,j)\leq D'(i,j)$.
\end{lemma}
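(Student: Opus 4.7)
The proof proceeds by strong induction on the label value $e = LA(i, j)$. For the base case $e = 0$, only entries on diagonal $0$ can carry this label, and $D'(i, i) \geq 0$ holds trivially.

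For the inductive step, fix an entry $(i, j)$ labeled $e > 0$ on diagonal $h = j - i$ with $h \equiv e \pmod{2}$, and assume the lemma for every entry of smaller label. I would split into four subcases according to the four labeling rules of Section~\ref{sec:SGED}. The boundary case (rule~1, applicable only when $e = |h|$) assigns the label to $(\max(0, -h), \max(0, h))$, where $D' = |h| = e$ exactly, so there is nothing to prove. The forward slide case (rule~3) handles entries $(r+s, r+s+h)$ with $1 \leq s \leq L'_{h,e} - r$ using Property~\ref{prop:SGEDnon}: the $[0,2]$ monotonicity along diagonal $h$ gives $D'(r+s, r+s+h) \geq D'(r, r+h)$, so this case reduces to the start entry.

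For the start entry (rule~2) with $r = \max\{L'_{h-1,e-1},\, L'_{h+1,e-1}+1\}$, the defining property of $L'$ and the induction hypothesis give $D'(L'_{h-1,e-1}, L'_{h-1,e-1}+h-1) \geq e-1$ and $D'(L'_{h+1,e-1}, L'_{h+1,e-1}+h+1) \geq e-1$. Diagonal monotonicity of $D'$ propagates these bounds down diagonals $h-1$ and $h+1$, yielding $D'(r, r+h-1) \geq e-1$ and $D'(r-1, r+h) \geq e-1$. Applying the DP recurrence (Property~\ref{prop:recursive}) to $(r, r+h)$, the two insertion/deletion candidates are then each $\geq e$. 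To handle the match candidate $D'(r-1, r+h-1) + dist(p'_r, q'_{r+h})$, I combine the single-step row/column bound $|D'(i,j) - D'(i, j-1)| \leq 1$ (which follows from the recurrence applied in reverse) with Property~\ref{prop:SGEDnon} at $(r-1, r+h-1)$; these pin the third term to be at least $e$ as well, so $D'(r, r+h) \geq e$.

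The main obstacle is the implicit backward labeling of rule~4, which assigns label $e$ to entries $(i, i+h)$ in the gap $L'_{h,e-2} < i < r$. Here diagonal monotonicity works against us: these entries lie above $(r, r+h)$, so Property~\ref{prop:SGEDnon} only gives the useless upper bound $D'(i, i+h) \leq D'(r, r+h)$. The plan is to exploit the termination condition of iteration $e-2$'s slide on diagonal $h$ --- namely $\sum_{i = r''+1}^{L'_{h,e-2}+1} dist(p'_i, q'_{i+h}) > 2$ --- combined with the fact that the choice $r \geq L'_{h-1,e-1}$ and $r \geq L'_{h+1,e-1}+1$ forces the column/row neighbors of each gap entry to be labeled with values at least $e-1$ by prior iterations on diagonals $h \pm 1$. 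Chaining the DP recurrence through the gap rows one at a time, and invoking Property~\ref{prop:SGEDnon} together with Properties~\ref{prop:LAdifference} and~\ref{prop:neighbor} to carry the lower bound through, should establish $D'(i, i+h) \geq e$ at each backward-labeled entry and complete the induction.
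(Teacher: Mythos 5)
Your overall structure (induction on the label value, reduction of the slide entries to the start entry via Property~\ref{prop:SGEDnon}, lower bounds of $e-1$ on the neighbors in diagonals $h\pm1$ coming from the induction hypothesis plus diagonal monotonicity) matches the paper, but the crux of the argument is missing and the one concrete claim you make about it is false. At the start entry $(r,r+h)$, the insertion/deletion candidates are indeed $\geq e$, but your treatment of the match candidate does not work: the single-step bound $|D'(i,j)-D'(i-1,j)|\leq 1$ applied to $D'(r,r+h-1)\geq e-1$ only yields $D'(r-1,r+h-1)\geq e-2$, and Property~\ref{prop:SGEDnon} at $(r-1,r+h-1)$ gives no further lower bound, so the candidate $D'(r-1,r+h-1)+dist(p'_r,q'_{r+h})$ is only pinned to $\geq e-2$ (the distance can be $0$). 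This is exactly the hard case: the diagonal predecessor of the entry in question is either the farthest $(e-2)$-labeled entry or a rule-4 entry carrying the \emph{same} label $e$, so neither the induction hypothesis nor monotonicity along the diagonal applies to it directly, and a gain of $2$ must be extracted from somewhere.

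That gain is what your rule-4 ``plan'' gestures at but never executes, and it requires an idea you do not state: walk backward up diagonal $h$ from the first $e$-labeled entry and let $u$ be the first row where the DP minimum is achieved by an insertion/deletion step, with $r''=\max\{L'_{h-1,e-3},\,L'_{h+1,e-3}+1\}$ the start entry of iteration $e-2$. If $u>r''$, then $u>L'_{h-1,e-3}$ and $u-1>L'_{h+1,e-3}$, so by the induction hypothesis and diagonal monotonicity both neighbors are $\geq e-1$, hence $D'(u,u+h)\geq e$, and Property~\ref{prop:SGEDnon} carries this down the diagonal; if $u\leq r''$, every step from $(r'',r''+h)$ is a diagonal step, and the termination condition of the $(e-2)$-slide guarantees the accumulated distance through the first $e$-labeled entry exceeds $2$, so $D'\geq D'(r'',r''+h)+2> e$ using $D'(r'',r''+h)\geq e-2$. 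Two further points: your stated reason that the gap entries' row/column neighbors carry labels at least $e-1$ is the wrong direction ($i<r\leq L'_{h\pm1,e-1}$ only bounds their labels from above; the correct reason is $i>L'_{h,e-2}\geq \max\{L'_{h-1,e-3},L'_{h+1,e-3}+1\}$), and the paper avoids your per-rule case split entirely by observing, via Property~\ref{prop:SGEDnon}, that it suffices to bound the \emph{first} $e$-labeled entry $(L'_{h,e-2}+1,\cdot)$ on each diagonal, which sidesteps the circularity of a start entry whose diagonal predecessor already carries label $e$. As written, your argument has a genuine gap at the only step where the lemma could fail.
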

Note that in particular, \(LA(n, n)\leq GED(P', Q')\).
\begin{proof}
	From Property \ref{prop:SGEDnon}, we only need to prove $e$ is the lower bound of the first entry whose label is $e$ in each diagonal $h$. 
	
	We proceed by induction on $e$. 
	\begin{enumerate}
		\item If $e = 0$, we only label the first entry in diagonal 0 as 0. We have $0\leq D'(0 ,0) = 0$. If $e=1$, then for diagonals $1$ and $-1$, we have $1\leq D'(0, 1) = D'(1, 0) =1$.
		\item Assume Lemma \ref{lem:constant} for labels less than $e$. For $e$, we consider the diagonals $h = -e$ to $e$:
		If $h=-e$ or $e$, we know $e\leq D'(|h|, 0)=e$ or $e\leq D'(0, h)=e$. 
		\\Otherwise, let $(f,f+h)$ be the first entry in diagonal \(h\) whose label is $e$. From Property \ref{prop:LAdifference}, $f = L'_{h,e-2}+1$. 
		Fig. \ref{fig:lowerbound} shows the notations. 
		\begin{figure}[h]
			\centering
			\includegraphics[scale=0.4]{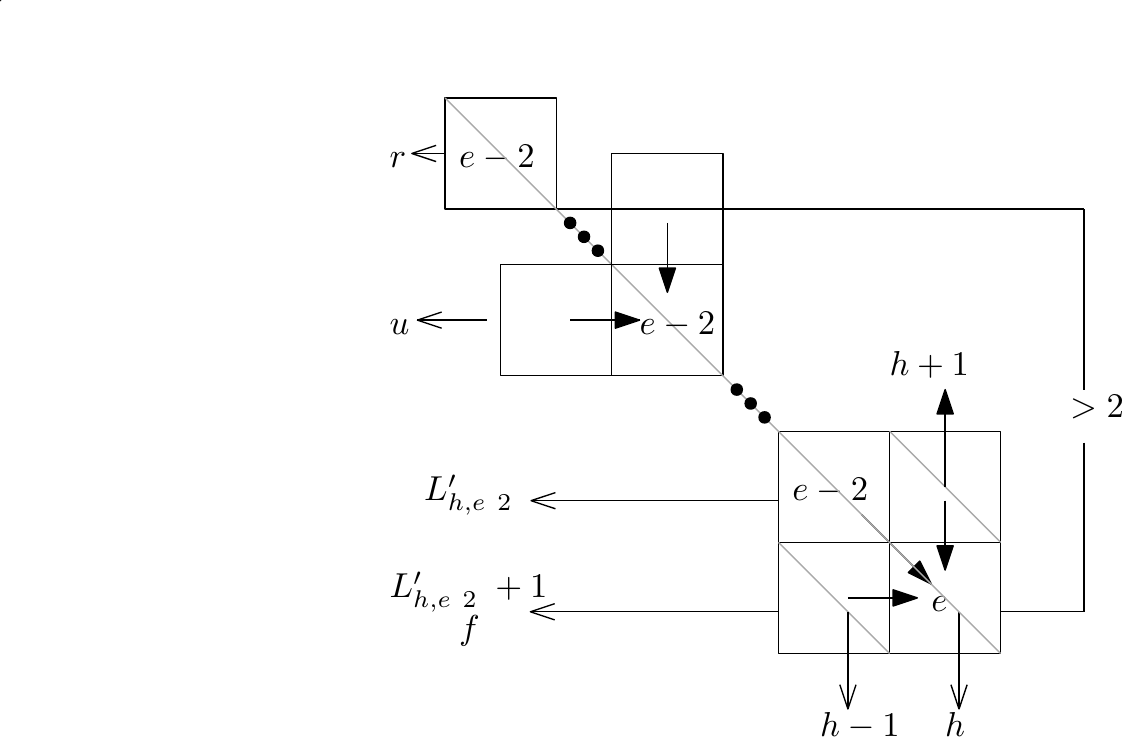}
			\caption{We compute the lower bound of entries which are labeled as $e$}
			\label{fig:lowerbound}
		\end{figure}
		From the refined Property \ref{prop:recursive}, we need to discuss three cases:
		\begin{enumerate}
			\item $D'(f, f+h) = D'(f-1, f+h) + 1$.
			\\From Property \ref{prop:neighbor}, we know $LA(f-1, f+h)=e-1$ or $e+1$.
			\begin{itemize}
				\item[$\bullet$] If $LA(f-1, f+h)=e-1$, $D'(f-1, f+h)\geq e-1$ from our assumption. So, $D'(f, f+h) =  D'(f-1, f+h) + 1 \geq e-1+1 =e$. 
				\item[$\bullet$] If $LA(f-1, f+h)=e+1$, then we know $L'_{h+1, e-1}$ is less than $f-1$. From our assumption and non-decreasing property, $e-1\leq D'(L'_{h+1, e-1}, L'_{h+1, e-1}+h+1)\leq D'(f-1, f+h-1)$. So, $D'(f, f+h)\geq D'(f-1, f+h-1)+1 \geq e$.
			\end{itemize}
			\item $D'(f, f+h) = D'(f, f+h-1) + 1$.
			\\This case is symmetric to the one above. 
			\item $D'(f, f+h) = D'(f-1, f+h-1) + dist(p'_{f}, q'_{f+h})$.
			\\$LA(f-1, f+h-1) = e-2$, because $f-1 = L'_{h, e-2}$. 
			Let $r$ be the row index of the first entry on the slide with label $e-2$ in diagonal $h$, i.e., $r = \max\{L'_{h-1, e-3}, L'_{h+1, e-3}+1\}$. See Fig. \ref{fig:lowerbound}. 
			We define $u$ as the row index of the first entry walking backward from entry $(f, f+h)$ along the diagonal $h$ where $D'(u, u+h) = \min\{D'(u, u+h-1), D'(u-1, u+h)\}+1$.
			\begin{itemize}
				\item[$\bullet$] If $u> r$, like Fig. \ref{fig:lowerbound}, then $u> L'_{h-1, e-3}$ and $u-1> L'_{h+1, e-3}$. 
				We have 
        \[
				D'(u, u+h-1)\geq D'(L'_{h-1, e-3}+1, L'_{h-1, e-3}+h)\geq e-1
				\]
				and
        \[
				D'(u-1, u+h)\geq D'(L'_{h+1, e-3}+1, L'_{h+1, e-3}+h+2)\geq e-1.
				\]
				Therefore,
        \[
				D'(u, u+h) = \min\{D'(u, u+h-1), D'(u-1, u+h)\}+1 \geq e.
        \]
        By Property~\ref{prop:SGEDnon}, we have $D'(f, f+h)\geq e$ as well.
				\item[$\bullet$] If $u\leq r$, then 
				\begin{align*}
				D'(f, f+h) &= D'(r, r+h)+\sum_{i=r+1}^{f} dist(p'_i, q'_{i+h})
				\\&> e-2+2 =e.
				\end{align*}
			\end{itemize}
		\end{enumerate}
		Having considered all the cases, we conclude the proof of Lemma \ref{lem:constant}.
	\end{enumerate}

\end{proof}
\paragraph{\textbf{The bounds for the approximate matching $C_{GS}(\mathcal{M}_{AGS})$.}}
From Algorithm \ref{alg:AppSGED}, we note the label increases correspond to not matching a point in Line \ref{alg4:unmatch}, and slides correspond to matching points. 
Let \(\mathcal{M}_{AGS}\) be the resulting matching.
So, 
\begin{align*}
C_{GS}(\mathcal{M}_{AGS}) &= |UM_{\mathcal{M}_{AGS}}| +\sum_{(i, j)\in \mathcal{M}_{AGS}} dist(p'_i, q'_j) 
\\&\leq LA(n,n) + 2\cdot LA(n,n) \leq 3LA(n,n)\leq 3GED(P', Q').
\end{align*}
We conclude the proof of Theorem \ref{the: O1}.

\paragraph*{Acknowledgements}
	The authors would like to thank Anne Driemel and Benjamin Raichel for helpful
	discussions.

%
%

\bibliographystyle{spphys}       
\bibliography{edit}

\end{document}